\newcommand{\diag}{\ensuremath{\operatorname{diag}}}
\newtheorem{proposition}{{\bf Proposition}}
\newtheorem{lemma}{{\bf Lemma}}
\newtheorem{corollary}{{\bf Corollary}}
\newcommand{\qed}{\nobreak \ifvmode \relax \else
  \ifdim\lastskip<1.5em \hskip-\lastskip
  \hskip1.5em plus0em minus0.5em \fi \nobreak
  \vrule height0.75em width0.5em depth0.25em\fi}
\newcounter{step}
\newlength{\totlinewidth}
  {\end{list}%
  \rule{\linewidth}{1pt}}
\newcounter{substep}
\newlength{\aligntop}
\newlength{\alignbot}
\renewenvironment{align}{%
  \vspace{\aligntop}
  \start@align\@ne\st@rredfalse\m@ne
}{%
  \math@cr \black@\totwidth@
  \egroup
  \ifingather@
    \restorealignstate@
    \egroup
    \nonumber
    \ifnum0=`{\fi\iffalse}\fi
  \else
    $$%
  \fi
  \ignorespacesafterend%
  \vspace{\alignbot}\par\noindent
} \makeatother
\begin{document}

\title{Opportunistic Relaying for Space-Time Coded Cooperation with Multiple Antenna Terminals}
\author{
\authorblockN{Behrouz Maham, \emph{Member, IEEE}, and Are Hj{\o}rungnes, \emph{Senior Member, IEEE}}\\
    \thanks{This work was supported by the Research Council of Norway
    through the project 176773/S10 entitled "Optimized Heterogeneous Multiuser MIMO Networks -- OptiMO". Behrouz Maham and
    Are Hj{\o}rungnes are with UNIK -- University Graduate Center, University of Oslo,
    Norway.
    This work was done during the stay of Behrouz Maham at Department of Electrical
    Engineering, Stanford University, USA. Emails: \protect\url{behrouzm@ifi.uio.no,arehj@unik.no}.}%
}
%
\maketitle

\begin{abstract}
We consider a wireless relay network with multiple antenna
terminals over Rayleigh fading channels, and apply distributed
space-time coding (DSTC) in \emph{amplify-and-forward} (A$\&$F)
mode.
The A$\&$F scheme is used in a way that each relay
transmits a scaled version of the linear combination of the received
symbols.
It turns out that, combined with power allocation in the relays,
A$\&$F DSTC results in an opportunistic relaying scheme, in which
only the \emph{best} relay is selected to retransmit the source's
space-time coded signal. Furthermore, assuming the knowledge of source-relay CSI at the source node, we design an efficient power allocation which outperforms uniform power allocation across the source antennas. Next, assuming $M$-PSK or $M$-QAM modulations, we analyze the performance of the proposed cooperative diversity transmission schemes
in a wireless relay networks with the
multiple-antenna source and destination. We derive the
probability density function (PDF) of the received SNR at the
destination. Then, the PDF is used to determine the symbol error
rate (SER) in Rayleigh fading channels. We derived closed-form
approximations of the average SER in the high SNR scenario, from which we find the
diversity order of system $R\min\{N_s, N_d\}$, where $R$, $N_s$, and $N_d$ are the number of the relays, source antennas, and destination antennas, respectively. Simulation results
show that the proposed system obtain more than 6 dB gain in SNR over A\&F MIMO DSTC for
BER $10^{-4}$, when $R=2$, $N_s=2$, and $N_d=1$.
\\
\\
\emph{Index Terms}--- Wireless relay networks, power control, performance analysis, MIMO.
\end{abstract}

\section{Introduction}
Space-time coding (STC) has received a lot of attention in the last
decade as a way of increasing the data rate and/or reduce the
transmitted power necessary to achieve a target bit error rate~(BER)
using multiple antenna transceivers. In ad-hoc network applications
or in distributed large scale wireless networks, the nodes are often
constrained in the complexity and size. This makes multiple-antenna
systems impractical for certain network applications~\cite{sen03a}.
In an effort to overcome this limitation, cooperative diversity
schemes have been introduced~\cite{sen03a,sen03b,lan00,lan02a}.
Cooperative diversity allows a collection of radios to relay signals
for each other and effectively create a virtual antenna array for
combating multipath fading in wireless channels. The attractive
feature of these techniques is that each node is equipped with only
\emph{one} antenna, creating a virtual antenna array. This property
makes them outstanding for deployment in cellular mobile devices as
well as in ad-hoc mobile networks, which have problems with
exploiting multiple-antenna due to the size limitation of the mobile
terminals.

Among the most widely used cooperative strategies are
amplify-and-forward (A\&F) \cite{lan02a}, \cite{nab04} and
decode-and-forward~(D\&F) \cite{sen03a,sen03b,lan02a}. The authors in
\cite{hua03} applied Hurwitz-Radon space-time codes in wireless
relay networks and conjecture a diversity factor around $R/2$ for
large $R$ from their simulations, where $R$ is the number of relays.

In \cite{jin06b}, a cooperative strategy was proposed, which
achieves a diversity factor of \emph{R} in a \emph{R}-relay wireless
network, using the so-called distributed space-time codes (DSTC).
In this strategy, a two-phase protocol is used. In the first phase, the
transmitter sends the information signal to the relays and in the second phase, the relays send information to the receiver. The signal sent by
every relay in the second phase is designed as a linear function of
its received signal. It was shown in \cite{jin06b} that the relays
can generate a linear space-time codeword at the receiver, as in a
multiple antenna system, although they only cooperate
distributively. This method does not require decoding at the relays
and for high SNR it achieves the optimal diversity
factor~\cite{jin06b}. Although distributed space-time coding does
not need instantaneous channel information in the relays, it
requires full channel information at the receiver of both the
channel from the transmitter to relays and the channel from relays
to the receiver. Therefore, training symbols have to be sent from
both the transmitter and the relays.
The design of practical A\&F
DSTCs that lead to reliable communication in wireless relay
networks, has also been recently considered~\cite{jin07,mah09twc,sus07}.

Distributed space-time coding was generalized to networks with
multiple-antenna nodes in \cite{jin05}. It was shown
that in a wireless network with $N_s$ antennas at the transmit node,
$N_d$ antennas at the receive node, and a total of $R$ antennas at
all relay nodes, the diversity order of $R\min\{N_s, N_d\}$ is
achievable \cite{jin05,pet08}.
In \cite{ogg08}, the problem of coding design considered over wireless relay network where both the transmitter and the receiver have several antennas.

Power efficiency is a critical design consideration for wireless
networks - such as ad-hoc and sensor networks - due to the limited
transmission power of the nodes. To that end, choosing the
appropriate relays to forward the source data, as well as the
transmit power levels of the source's antenna become important design
issues.
Several power allocation strategies for relay networks were studied
based on different cooperation strategies and network topologies in
\cite{Hon07}. In \cite{mah08v}, we proposed power allocation
strategies for repetition-based cooperation that take both the
statistical CSI and the residual energy information into account to
prolong the network lifetime while meeting the BER QoS requirement
of the destination. Distributed power allocation strategies for
D\&F cooperative systems were investigated in~\cite{che08}. Power allocation in three-node models are discussed in
\cite{hos05} and \cite{bro04}, while multi-hop relay networks are
studied in \cite{rez04,han04,doh04}.
The relay selection algorithms for networks with multiple relays can be also resulted
in power efficient transmission strategies. Recently proposed
practical relay selection strategies include pre-select one relay
\cite{luo05}, best-select relay \cite{luo05},
blind-selection-algorithm \cite{lin05}, informed-selection-algorithm
\cite{lin05}, and cooperative relay selection \cite{zhe05}. In
\cite{ble06b}, an opportunistic relaying scheme is introduced.
According to opportunistic relaying, a single relay among a set of
$R$ relay nodes is selected, depending on which relay provides for
the \emph{best} end-to-end path between source and destination.
Bletsas et al. \cite{ble06b} proposed two heuristic methods for
selecting the best relay based on the end-to-end instantaneous
wireless channel conditions. Performance and outage analysis of
these heuristic relay selection schemes were studied in \cite{ble07}
and \cite{zha06}.

In this paper, we propose decision metrics for
opportunistic relaying based on maximizing the received
instantaneous SNR at the destination in A\&F
mode, when both the source and destination have multiple-antennas.
We use a simple feedback from the destination toward
the relays to select the best relay and the best antenna at the source node.


%

%
Our main contributions can be summarized
as follows:
\begin{itemize}
  \item 
  We show that the distributed space-time codes (DSTC) based on
  \cite{jin06b} in a relay network with the multiple-antennas source and
  destination
  leads to a novel opportunistic relaying, when maximum instantaneous SNR based power allocation is employed.
  \item Assuming the knowledge of CSI of the source-relay links at the source, the optimum power allocations along the source's antennas based on maximizing the received SNR are derived.
  \item We analyze the performance of the proposed A$\&$F opportunistic relaying with
space-time coded source. In addition, the performance analysis of full-opportunistic scheme is studied, in which power control for both the source antennas and the relays are employed. More specifically, we derive the average
symbol error rate (SER) of opportunistic relaying and full-opportunistic schemes with $M$-PSK and
$M$-QAM modulations in a Rayleigh fading channels. Furthermore, the
probability density function (PDF) of the received SNR at the
destination is obtained.
\item For sufficiently
high SNR, simple closed-form average SER expressions are derived for
A$\&$F opportunistic relaying links with multiple cooperating branches
and multiple antennas source/destination. Based on the proposed
approximated SER expression, it is shown that the proposed schemes
achieve the diversity order of $R\min\{N_s, N_d\}$, where $R$,
$N_s$, and $N_d$ are the number of relays, source antennas, and the
destination antennas, respectively.
\item We verify the obtained analytical results using
simulations. The results show that the derived error rates have the same
system performance as simulation results. Assuming $R=2$, $N_s=2$,
$N_d=1$, the proposed opportunistic scheme outperforms DSTC by about
6 dB gain in SNR at BER $10^{-4}$.
\end{itemize}


The remainder of this paper is organized as follows:
In Section II, the system model is given.
The power control strategies for A\&F DSTC based on the availability of CSI at the source and relays are
considered in Section III.
The average SER of the proposed opportunistic schemes under $M$-PSK and $M$-QAM modulations are derived in Section IV. In Section V, closed-form approximations for the average SER are presented, and the diversity analysis is carried out.
In Section VI, the overall system performance is presented via simulations for
different numbers of relays, source and destination antennas, and the correctness of the analytical
formulas are confirmed by Monte Carlo simulations.
Conclusions are presented in Section VII. The article contains four
appendices which present various proofs.

\emph{Notations}: The superscripts $^t$ and $^H$ stand for
transposition and conjugate transposition, respectively.
The expectation value operation is denoted by $\mathbb{E}\{\cdot\}$.
%
%
The symbol $\boldsymbol{I}_T$ stands for the $T\times T$ identity
matrix. $\|\boldsymbol{A}\|$ denotes the Frobenius norm of the
matrix $\boldsymbol{A}$. The trace of the matrix $\boldsymbol{A}$ is
denoted by $\text{tr}\left\{\boldsymbol{A}\right\}$.
$\diag\{\boldsymbol{A}_1,\ldots,\boldsymbol{A}_R\}$ denotes the
block diagonal matrix.

\section{System Model}

\begin{figure}[e]
  \centering
  \vspace{-2.5cm}\hspace*{-.5cm}
  \includegraphics[width=\columnwidth]{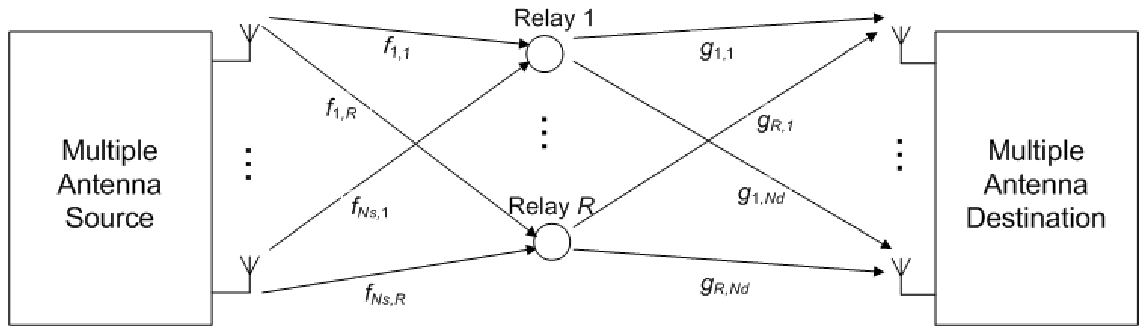}\\
  \vspace{-2.5cm}
  \caption{Wireless relay network including one source with $N_s$ antennas, $R$
relays, and one destination with $N_d$ antennas.
  }\label{f1}
\end{figure}

Consider a wireless communication scenario where the source node
\emph{s} transmits information to the destination node \emph{d} with
the assistance of one or more relays denoted Relay $r=1, 2, \ldots,
R$ (see Fig. \ref{f1}). The source and destination nodes are
equipped with $N_s$ and $N_d$ antennas, respectively. Without loss
of generality, it is assumed that each relay nodes is equipped with
a single antennas. Note that this network can be transformed to
relays with multiple antenna, since the transmit and receive signals
at different antennas of the same relay can be processed and
designed independently.

We denote the links from the $N_s$ source antennas to the $r$th
relay as $f_{1,r},f_{2,r},\ldots,f_{N_s,r}$, and the links from the
$r$th relay to the $N_d$ destination antennas as
$g_{r,1},g_{r,2},\ldots,g_{r,N_d}$.
Under the assumption that each link undergoes independent Rayleigh
process $f_{i,r}$, and $g_{r,j}$ are independent complex Gaussian
random variables with zero-mean and variances $\sigma_{f_{r}}^2$,
and $\sigma_{g_r}^2$, respectively. Since the multiple antennas in
source and destination are co-located, and the co-located antennas
have the same distances to relays, we skipped the $i$ and $j$
indices of $\sigma_{f_{r}}^2$ and $\sigma_{g_r}^2$.

Assume that the source wants to send $K$ symbols $s_1$, $s_2$,
$\ldots$, $s_K$ to the destination during $T$ time slots. $T$ should
be less than the coherent interval, that is, the time duration among
which the channels $f_{i,r}$, and $g_{r,j}$ are constant. Henceforth,
we assume using full-rate space-time codes, and thus, $K=T$. Similar
to \cite{jin06b}, our scheme requires two phases of transmission.
During the first phase, the source should transmit a $T\times N_s$
dimensional orthogonal code matrix $\boldsymbol{S}_1$ to \emph{all}
relays. We can represent $\boldsymbol{S}_1$ in terms of the vector
$\boldsymbol{s}=[s_1, s_2, \ldots, s_T]^t$ as
\begin{equation}\label{1}
    \boldsymbol{S}_1=[\boldsymbol{A}_1\boldsymbol{s} \,\boldsymbol{A}_2 \boldsymbol{s} \,\ldots\, \boldsymbol{A}_{N_s}\boldsymbol{s}],
\end{equation}
where $\boldsymbol{A}_i$, $i=1,\ldots,N_s$, are $T\times T$ unitary
matrices, and $\boldsymbol{s}_i=\boldsymbol{A}_i\boldsymbol{s}$
describes the $i$th column of a $T\times N_s$ orthogonal space-time
code. We assume the following normalization
\begin{equation}\label{2}
    \mathbb{E}\left[\,\text{tr}\{\boldsymbol{S}_1^H\boldsymbol{S}_1\}\right]=\mathbb{E}\left[\,\text{tr}\left\{\sum_{k=1}^T|s_k|^2\boldsymbol{I}_{N_s}\right\}\right]=N_s.
\end{equation}
The source transmits $\sqrt{P_1T/N_s}\boldsymbol{S}_1$ where $P_1T$
is the average total power used at the source during the first
phase. Thus, $\sqrt{P_1T/N_s}\boldsymbol{s}_i$, $i=1,\ldots,N_s$, is
the signal sent by the $i$th antenna with the average power of
$P_1T/N_s$. Assuming that $f_{i,r}$ does not vary during $T$
successive intervals, the $T\times1$ receive signal vector at the
$r$th relay is
%
%
\begin{equation}\label{3}
    \boldsymbol{x}_r=\sqrt{\frac{P_1T}{N_s}}\boldsymbol{S}_1\boldsymbol{f}_r+\boldsymbol{v}_r,
\end{equation}
where
$\boldsymbol{f}_r=[{f}_{1,r}\,{f}_{2,r}\,\ldots\,{f}_{N_s,r}]^t$,
and $\boldsymbol{v}_r$ is a $T\times1$ complex zero-mean white
Gaussian noise vector with
variance $\mathcal{N}_1$.

In the second phase of the transmission, all relays simultaneously
transmit linear functions of their received signals
$\boldsymbol{x}_r$. In order to construct a distributed space-time
codes, the received signal at the $j$th antenna of the
\emph{destination} is collected inside the $T\times1$ vector
$\boldsymbol{y}_j$ as
%
%
\begin{equation}\label{4}
    \boldsymbol{y}_j=\sum_{r=1}^Rg_{r,j}\,\rho_r\boldsymbol{C}_r\boldsymbol{x}_r+\boldsymbol{w}_j,
\end{equation}
for $j=1,2,\ldots,N_r$, where $\boldsymbol{w}_j$ is a $T\times1$
complex zero-mean white Gaussian noise vector with component-wise
variance $\mathcal{N}_2$, $\rho_r$ is the scaling factor at Relay
$r$, and $\boldsymbol{C}_r$, of size $T\times T$, are obtained by
representing the $r$th column of an appropriate $T\times R$
dimensional space-time code matrix as
$\boldsymbol{C}_r\boldsymbol{s}$. This construction method
originates from the construction of a space-time code for co-located
multiple-antenna systems, where the transmitted signal vector from
the $k$th antenna is $\boldsymbol{C}_k\boldsymbol{s}$ \cite{wan03}.
When there is no instantaneous channel state information (CSI) at
the relays, but statistical CSI is known, a useful constraint is to
ensure that a given average transmitted power is maintained. That
is,
\begin{equation}\label{5}
    \rho_r=\sqrt{\frac{P_{2,r}}{\sigma_{f_r}^2P_1+\mathcal{N}_1}},
\end{equation}
where $P_{2,r}$ is the average transmitted power from Relay
\emph{r}.

We can further represent input-output relationship of the DSTC as
the space-time code in a multiple-antenna system.
By setting the $T\times N_s R$ space-time encoded signal
\begin{equation}\label{6a}
\boldsymbol{S}=[\boldsymbol{C}_{1}\boldsymbol{S}_1,\,\boldsymbol{C}_{2}\boldsymbol{S}_1,\,\ldots,\,\boldsymbol{C}_{R}\boldsymbol{S}_1],
\end{equation}
and by concatenating the received signals of the destination
antennas, i.e.,
$\boldsymbol{Y}=[\boldsymbol{y}_{1}\,\boldsymbol{y}_{2}\,\ldots\,\boldsymbol{y}_{N_d}]\in\mathds{C}^{T\times N_d}$,
from \eqref{3}-\eqref{4}, we have
\begin{equation}\label{6}
    \boldsymbol{Y}=\sqrt{\frac{P_1T}{N_s}}\boldsymbol{S}\boldsymbol{H}+\boldsymbol{W}_T.
\end{equation}
The $N_sR\times N_d$ channel matrix $\boldsymbol{H}$ in
\eqref{6} can be written as
\begin{equation}\label{6z}
    \boldsymbol{H}=\boldsymbol{F}\boldsymbol{\Lambda}\boldsymbol{G},
\end{equation}
where matrices $\boldsymbol{F}$, $\boldsymbol{\Lambda}$, and $\boldsymbol{G}$ of sizes $N_s R\times R$, $R\times R$, $R\times N_d$, respectively, are given by
\begin{equation*}
    \boldsymbol{F}=\diag\left\{
                      \boldsymbol{f}_1,
                      \ldots,
                      \boldsymbol{f}_R
\right\},\,\,\,\,\,\,\,\,\,     \boldsymbol{\Lambda}=\diag\left\{
                      \rho_1,
                      \ldots,
                      \rho_R
\right\},
\end{equation*}
\begin{equation*}
    \boldsymbol{g}_r=[{g}_{r,1}\,{g}_{r,2}\,\ldots\,{g}_{r,N_d}],\,\,\,\,\,\,\,\,\,     \boldsymbol{G}=\left[
                      \boldsymbol{g}_1^t,
                      \ldots,
                      \boldsymbol{g}_R^t \right]^t.
\end{equation*}
The total noise in \eqref{6} is collected into the $T\times N_d$
matrix
\begin{equation}\label{7}
    \boldsymbol{W}_T=\boldsymbol{V}\boldsymbol{\Lambda}\boldsymbol{G}+\boldsymbol{W},
\end{equation}
where
$\boldsymbol{V}=\left[\boldsymbol{C}_{1}\boldsymbol{v}_1\,\boldsymbol{C}_{2}\boldsymbol{v}_2\,\ldots\,
\boldsymbol{C}_{R}\boldsymbol{v}_{R}\,\right]\in\mathds{C}^{T\times R}$
and
$\boldsymbol{W}=\left[\boldsymbol{w}_1\,\boldsymbol{w}_2\,\ldots\,\boldsymbol{w}_{N_d}\,\right]\in\mathds{C}^{T\times N_d}$.

Since in this paper, we focus on orthogonal design, the maximum
likelihood (ML) detection is decomposed to single-symbol detection,
maximal-ratio combining (MRC) can be applied at the destination \cite{sim00}. To calculate the post detection SNR at the output of the ML DSTC decoder, we need to compute the received signal power. Hence, using \eqref{6}, we have
\begin{align}\label{10}
    \eta_{{s}_d}&=\frac{P_1T}{N_s}\mathbb{E}_s\!\left[\text{tr}\{\boldsymbol{S}\boldsymbol{H}\boldsymbol{H}^H\boldsymbol{S}^H\}\right]
    \!=\!\frac{P_1T}{N_s}\mathbb{E}_s\!\left[\text{tr}\{\boldsymbol{H}\boldsymbol{H}^H\boldsymbol{S}^H\boldsymbol{S}\}\right]
    =\frac{P_1T}{N_s}\,\text{tr}\{\boldsymbol{H}\boldsymbol{H}^H\,\mathbb{E}_s[\boldsymbol{S}^H\boldsymbol{S}]\}.
\end{align}

To have the linear orthogonal ML detection, we should design the
DSTC, such that
\begin{align}\label{11}
\boldsymbol{S}^H\boldsymbol{S}=(|s_1|^2+|s_2|^2+\ldots+|s_T|^2)\boldsymbol{I}_{N_sR},
\end{align}
and using the normalization assumed in \eqref{2}, we have
$\mathbb{E}_s[\boldsymbol{S}^H\boldsymbol{S}]=\boldsymbol{I}_{N_sR}$. For designing the distributed orthogonal space-time codes in multiple-antenna relay netowrks, one can see \cite{mah10}.
Thus, $\eta_{{s}_d}$ in \eqref{10} can be evaluated as
\begin{align}\label{13}
    \eta_{{s}_d}&=\frac{P_1T}{N_s}\,\text{tr}\{\boldsymbol{H}\boldsymbol{H}^H\}=\frac{P_1T}{N_s}\,\sum_{i=1}^{N_sR}\left[\boldsymbol{H}\boldsymbol{H}^H\right]_{i,i}
\nonumber\\
&=\frac{P_1T}{N_s}\!\sum_{r=1}^{R}\!\sum_{n=1}^{N_s}|f_{n,r}|^2\rho_r^2\sum_{j=1}^{N_d}|g_{r,j}|^2
=\frac{P_1T}{N_s}\!\sum_{r=1}^{R}\rho_r^2\|\boldsymbol{f}_{r}\|^2\|\boldsymbol{g}_{r}\|^2.
\end{align}

From \eqref{7}, and assuming $\boldsymbol{C}_r$, $r=1,\ldots,R$ are unitary matrices, the total noise power at the destination can be
written as
\begin{equation}\label{14}
    \eta_{{w}_T}=\mathbb{E}_{v,w}[\text{tr}\{\boldsymbol{W}_T\boldsymbol{W}_T^H\}]=
    T\left(\sum_{k=1}^R\rho_k^2\|\boldsymbol{g}_{k}\|^2\mathcal{N}_1+N_d\mathcal{N}_2\right).
\end{equation}

Combining \eqref{13} and \eqref{14}, the received SNR at the
destination can be written as
\begin{align}\label{15}
    \text{SNR}_{\text{ins}}=\sum_{r=1}^{R}\frac{P_1\rho_r^2\|\boldsymbol{f}_{r}\|^2\|\boldsymbol{g}_{r}\|^2}{N_s\displaystyle\sum_{k=1}^R\rho_k^2\|\boldsymbol{g}_{k}\|^2\mathcal{N}_1+N_sN_d\mathcal{N}_2}.
\end{align}

\section{Power Control in A$\&$F Space-Time Coded Cooperation}
In this section, we propose power allocation schemes for the A$\&$F
distributed space-time codes with multiple antennas
source/destination, based on maximizing the received SNR at the
destination \emph{d}. First, we will find the optimum distribution
of transmitted powers among relays, i.e., $P_{2,r}$, based on
instantaneous SNR. Then, the optimum power transmitted in the two
phases, i.e., $P_1$ and $P_2=\sum_{r=1}^R P_{2,r}$, will be obtained
by maximizing the average received SNR at the destination.

\subsection{Power Control among Relays with No CSI at the Source}
Here, we find the optimum distribution of the transmitted powers
among relays during the second phase, in a sense of maximizing the
instantaneous SNR at the destination.

\subsubsection{Optimum Power Allocation}
Using \eqref{5} and \eqref{15}, the instantaneous received SNR at the destination can be
written as
\begin{equation}\label{11b}
\text{SNR}_{\text{ins}}=\frac{\boldsymbol{p}^t\boldsymbol{U}\boldsymbol{p}}
{\boldsymbol{p}^t\boldsymbol{Q}\boldsymbol{p}+N_s N_d\mathcal{N}_2},
\end{equation}
where
$\boldsymbol{p}=[\sqrt{P_{2,1}},\sqrt{P_{2,2}},\ldots,\sqrt{P_{2,R}}]^t$
and diagonal $R\times R$ matrices $\boldsymbol{U}$ and $\boldsymbol{V}$ are
defined as
\begin{align}\label{12b}
\boldsymbol{U}&\!\!=\!\diag\!\!\left[\!\frac{P_{\!1}\|\boldsymbol{f}_{\!1}\|^2\|\boldsymbol{g}_{\!1}\|^2}
{\sigma_{f_1}^2P_1+\mathcal{N}_1},
\frac{P_{\!1}\|\boldsymbol{f}_{\!2}\|^2\|\boldsymbol{g}_{\!2}\|^2}
{\sigma_{f_2}^2P_1+\mathcal{N}_1},\ldots,\frac{P_{\!1}\|\boldsymbol{f}_{\!\!R}\|^2\|\boldsymbol{g}_{\!R}\|^2}
{\sigma_{f_R}^2P_1+\mathcal{N}_1}\!\right]\!,
\nonumber\\
\boldsymbol{Q}&=
\diag\left[\frac{N_s\|\boldsymbol{g}_{1}\|^2\mathcal{N}_1}
{\sigma_{f_1}^2P_1+\mathcal{N}_1},
\frac{N_s\|\boldsymbol{g}_{2}\|^2\mathcal{N}_1}
{\sigma_{f_2}^2P_1+\mathcal{N}_1},\ldots,
\frac{N_s\|\boldsymbol{g}_{R}\|^2\mathcal{N}_1}
{\sigma_{f_R}^2P_1+\mathcal{N}_1}\right].
\end{align}

Then, the optimization problem is formulated as
\begin{equation}\label{13b}
\boldsymbol{p}^*=\arg\max_{\boldsymbol{p}}\text{SNR}_{\text{ins}},\,\,\,\,\text{subject
to}\,\,\,\, \boldsymbol{p}^t\boldsymbol{p}=P_2,
\end{equation}
where the $R\times 1$ vector $\boldsymbol{p}^*$ denotes the optimum
values of power control coefficients. Since
$\boldsymbol{p}^t\boldsymbol{p}=P_2$, we can rewrite \eqref{11b} as
$\text{SNR}_{\text{ins}}=\frac{\boldsymbol{p}^t\boldsymbol{U}\boldsymbol{p}}
{\boldsymbol{p}^t\boldsymbol{W}\boldsymbol{p}}$,
where diagonal matrix $\boldsymbol{W}$ is defined as
$\boldsymbol{W}=\boldsymbol{Q}+\frac{N_s
N_d\mathcal{N}_2}{P_2}\boldsymbol{I}_T$. Since $\boldsymbol{W}$ is a
real-valued positive semi-definite matrix, we define
$\boldsymbol{q}\triangleq\boldsymbol{W}^{\frac{1}{2}}\boldsymbol{p}$,
where
$\boldsymbol{W}=(\boldsymbol{W}^{\frac{1}{2}})^t\boldsymbol{W}^{\frac{1}{2}}$.
Then, $\text{SNR}_{\text{ins}}$ can be rewritten as
\begin{equation}\label{15b}
\text{SNR}_{\text{ins}}=\frac{\boldsymbol{q}^t\boldsymbol{Z}\boldsymbol{q}}
{\boldsymbol{q}^t\boldsymbol{q}},
\end{equation}
where diagonal matrix $\boldsymbol{Z}$ is
$\boldsymbol{Z}=\boldsymbol{U}\boldsymbol{W}^{-1}$. Now, using
Rayleigh-Ritz theorem \cite{hor85}, we have 
\begin{equation}\label{16b}
\frac{\boldsymbol{q}^t\boldsymbol{Z}\boldsymbol{q}}
{\boldsymbol{q}^t\boldsymbol{q}}\leq\lambda_{\max},
\end{equation}
where $\lambda_{\max}$ is the largest eigenvalue of
$\boldsymbol{Z}$, which is corresponding to the largest diagonal
element of $\boldsymbol{Z}$, i.e.,
\begin{align}\label{17b}
\lambda_{\max}\!&=\!\max_{i\in\{1,\ldots,R\}}\!\!\lambda_i
=
\!\max_{i\in\{1,\ldots,R\}}\frac{P_1P_2\|\boldsymbol{f}_{i}\|^2\|\boldsymbol{g}_{i}\|^2}{P_2N_s
\|\boldsymbol{g}_{i}\|^2\!\mathcal{N}_1\!+\!N_s
N_d\mathcal{N}_2(\sigma_{f_i}^2P_1\!+\!\mathcal{N}_1\!)}.
\end{align}
The equality in $\frac{\boldsymbol{q}^t\boldsymbol{Z}\boldsymbol{q}}
{\boldsymbol{q}^t\boldsymbol{q}}=\lambda_{\max}$ holds if
$\boldsymbol{q}$ is proportional to the eigenvector of
$\boldsymbol{Z}$ corresponding to $\lambda_{\max}$. Using the
eigenvalue decomposition of the diagonal matrix $\boldsymbol{Z}$, which contains positive diagonal elements,
it is obvious that the matrix which is
consisting of the normalized eigenvectors, is the identity matrix.
Hence, the optimum $\boldsymbol{q}_{\max}$ is proportional to
$\boldsymbol{e}_{i_{\max}}$, which is a $R\times 1$ vector with only
zero elements, except one at the $i_{\max}$-th component. On the
other hand, since
$\boldsymbol{p}=\boldsymbol{W}^{-\frac{1}{2}}\boldsymbol{q}$, and
$\boldsymbol{W}$ is a diagonal matrix, the optimum
$\boldsymbol{p}^*$ is also proportional to
$\boldsymbol{e}_{i_{\max}}$. Using the power constraint of the
transmitted power in the second phase, i.e.,
$\boldsymbol{p}^t\boldsymbol{p}=P_2$, we have
$\boldsymbol{p}^*=\sqrt{P_2}\boldsymbol{e}_{i_{\max}}$. This means
that for each realization of the network channels, the best relay
should transmit all the available power $P_2$, while all the other relays
should stay silent. 



\subsubsection{Relay Selection Strategy}
The process of selecting the best relay could be done by the
destination. This is feasible since the destination node should be
aware of both the backward and forward channels for coherent decoding.
Thus, the same channel information could be exploited for the
purpose of relay selection. However, if we assume a distributed
relay selection algorithm, in which relays independently decide to
select the best relay among them, such as work done in
\cite{ble06b}, the knowledge of local channels $f_i$ and $g_i$ is
required for the $i$th relay. The estimation of $f_i$ and $g_i$ can
be done by transmitting a ready-to-send (RTS) packet and a
clear-to-send (CTS) packet in MAC protocols.

\subsection{Power Allocation with Partial CSI at the Source}
Here, we study the situation in which the CSI of the source-relay
links are known at the source node. In this case, instead of uniform
power allocation used in the previous subsection, power allocation
is used over the transmit antennas. Thus, \eqref{3} can be rewritten
as
\begin{equation}\label{3q}
    \boldsymbol{x}_r=\sqrt{T}\boldsymbol{X}_{\!p}\boldsymbol{f}_r+\boldsymbol{v}_r,
\end{equation}
where $\boldsymbol{X}_{\!p}=[P_{1,1}\boldsymbol{A}_1\boldsymbol{s}
\,P_{1,2}\boldsymbol{A}_2 \boldsymbol{s} \,\ldots\,
P_{1,N_s}\boldsymbol{A}_{N_s}\boldsymbol{s}]$,
$\sum_{k=1}^{N_s}P_{1,k}=P_1$, and $P_{1,k}$, $k=1,\ldots,N_s$, is
the transmit power from the $k$th source antenna.

Hence, using \eqref{4}-\eqref{6} and \eqref{3q}, $ \eta_{{s}_d}$ in \eqref{10} can be rewritten as
\begin{align}\label{10q}
    \eta_{{s}_d}&=T\mathbb{E}_s\!\left[\text{tr}\{\boldsymbol{S}_p\boldsymbol{H}\boldsymbol{H}^H\boldsymbol{S}_p^H\}\right]
    \!=\!T\mathbb{E}_s\!\left[\text{tr}\{\boldsymbol{H}\boldsymbol{H}^H\boldsymbol{S}_p^H\boldsymbol{S}_p\}\right]
=T\,\text{tr}\{\boldsymbol{H}\boldsymbol{H}^H\,\mathbb{E}_s[\boldsymbol{S}_p^H\boldsymbol{S}_p]\},
\end{align}
where
\begin{align}
\boldsymbol{S}_p=[&P_{1,1}\boldsymbol{C}_{1}\boldsymbol{A}_1\boldsymbol{s},\,
\ldots,P_{1,N_s}\boldsymbol{C}_{1}\boldsymbol{A}_{N_s}\boldsymbol{s},\,\ldots,
P_{1,1}\boldsymbol{C}_{R}\boldsymbol{A}_1\boldsymbol{s}
,\ldots,P_{1,N_s}\boldsymbol{C}_{R}\boldsymbol{A}_{N_s}\boldsymbol{s}],
\end{align}
has the size $T\times N_sR$ and using the normalization assumed in \eqref{2}, we have
\begin{align}\label{12q}
\mathbb{E}_s[\boldsymbol{S}_p^H\boldsymbol{S}_p]=\diag(P_{1,1},\ldots,P_{1,N_s})\otimes\boldsymbol{I}_{R},
\end{align}
where $\otimes$ is Kroncker product. Thus, $\eta_{{s}_d}$ can be evaluated as
\begin{align}\label{13q}
    \eta_{{s}_d}&=T\sum_{r=0}^{R-1}\sum_{n=1}^{N_s}P_{1,n}\left[\boldsymbol{H}\boldsymbol{H}^H\right]_{N_s r+n,N_s r+n}
=T\!\sum_{r=1}^{R}\!\sum_{n=1}^{N_s}P_{1,n}|f_{n,r}|^2\rho_r^2\|\boldsymbol{g}_{r}\|^2.
\end{align}

\subsubsection{Optimum Power Allocation}
For deriving the optimum value of power in a sense of minimizing the received SNR, we have to compute $\text{SNR}_{\text{ins}}=\frac{\eta_{{s}_d}}{\eta_{w}}$.
Combining \eqref{14} and \eqref{13q}, the received SNR at the
destination can be written as $\text{SNR}_{\text{ins}}=\sum_{n=1}^{N_s}\theta_n P_{1,n}$ where
\begin{align}\label{16s}
    \theta_n=\frac{\sum_{r=1}^{R}|f_{n,r}|^2\rho_r^2\|\boldsymbol{g}_{r}\|^2}{N_d\sum_{k=1}^R\rho_k\|
    \boldsymbol{g}_{k}\|^2\mathcal{N}_1+N_d\mathcal{N}_2}.
\end{align}
Hence, we can formulate the following problem to find the optimum values of
$P_{1,n}$:
\begin{align}\label{12g}
    &\max\,\, \sum_{n=1}^{N_s}\theta_n P_{1,n},
    \nonumber\\
    &\text{s.t.}\,\,\,
    \sum_{n=1}^{N_s} P_{1,n}\leq P_{1},
    \nonumber\\
    &\,\,\,\,\,\,\,\,\,\,\,0\leq P_{1,n},\,\,\text{for}\,\,n=1,\ldots,
    N.
\end{align}

The optimization problem in \eqref{12g} is a maximal assignment
problem, and it is easy to show that the solution to this problem is
\begin{align}\label{e}
    P_{1,n}^*=\left\{\begin{array}{cc}
                                                P_1, & \text{if}\,\,n=\arg \displaystyle\max_{i\in \{1,\ldots,N_s\}}
                                                \theta_i,
\\
0, & \text{otherwise}. \\
                                              \end{array}
    \right.
\end{align}
Therefore, the optimum solution for the problem stated in \eqref{12g} is such that the whole power in the first phase is transmitted by an antenna at the source with the highest value of $\theta_n$ in \eqref{16s}.
From \eqref{e}, we can rewrite the received SNR as
$\text{SNR}_{\text{ins}}=\sum_{r=1}^{R}\zeta_r$ where
\begin{align}\label{24kkk}
\zeta_r=\frac{P_1P_{2,r}\displaystyle\max_{n\in\{1,\ldots,N_s\}}|f_{n,r}|^2\|\boldsymbol{g}_{r}\|^2}{P_2
\|\boldsymbol{g}_{r}\|^2\!\mathcal{N}_1\!+\!
N_d\mathcal{N}_2(\sigma_{\xi_r}^2P_1\!+\!\mathcal{N}_1\!)}.
\end{align}

Now, by defining $\xi_r=\max_{n\in\{1,\ldots,N_s\}}|f_{n,r}|^2$ with mean $\sigma_{\xi_r}^2$, we can employ a similar procedure used in the previous subsection to find the optimal values of $P_{2,r}$, and the matrices $\boldsymbol{U}$ and $\boldsymbol{Q}$ in \eqref{12b} are redefined as
\begin{align}\label{12bb}
\boldsymbol{U}&\!\!=\!\diag\!\!\left[\!\frac{P_{\!1}\xi_1\|\boldsymbol{g}_{\!1}\|^2}
{\sigma_{\xi_1}^2P_1+\mathcal{N}_1},
\frac{P_{\!1}\xi_2\|\boldsymbol{g}_{2}\|^2}
{\sigma_{\xi_2}^2P_1+\mathcal{N}_1},\ldots,\frac{P_{\!1}\xi_{\!R}\|\boldsymbol{g}_{\!R}\|^2}
{\sigma_{\xi_R}^2P_1+\mathcal{N}_1}\!\right]\!,
\nonumber\\
\boldsymbol{Q}&=
\diag\left[\frac{N_s\|\boldsymbol{g}_{1}\|^2\mathcal{N}_1}
{\sigma_{\xi_1}^2P_1+\mathcal{N}_1},
\frac{N_s\|\boldsymbol{g}_{2}\|^2\mathcal{N}_1}
{\sigma_{\xi_2}^2P_1+\mathcal{N}_1},\ldots,
\frac{N_s\|\boldsymbol{g}_{R}\|^2\mathcal{N}_1}
{\sigma_{\xi_R}^2P_1+\mathcal{N}_1}\right].
\end{align}

Therefore, similar to \eqref{17b}, the whole transmission power should be sent from a best relay in the optimal setting. The relay with highest value of $\frac{P_1\displaystyle\xi_j\|\boldsymbol{g}_{j}\|^2}{P_2
\|\boldsymbol{g}_{j}\|^2\!\mathcal{N}_1\!+\!
N_d\mathcal{N}_2(\sigma_{\xi_j}^2\!P_1\!+\!\mathcal{N}_1\!)}
$ is selected as the best relay, and its corresponding power is chosen as
\begin{align}\label{e2}
    P_{2,r}^*=\left\{\begin{array}{cc}
                                                P_2, & \text{if}\,\,r=\arg \!\!\displaystyle\max_{j\in \{1,\ldots,R\}}\!
                                                \frac{P_1\displaystyle\xi_j\|\boldsymbol{g}_{j}\|^2}{P_2
\|\boldsymbol{g}_{j}\|^2\!\mathcal{N}_1\!+\!
N_d\mathcal{N}_2(\sigma_{\xi_j}^2\!P_1\!+\!\mathcal{N}_1\!)},
\\
0, & \text{otherwise}. \\
                                              \end{array}
    \right.
\end{align}

\subsubsection{Relay and Source Antenna Selection Strategy}
Based on \eqref{e} and \eqref{e2}, we can summarize the process of relay and source's antenna selection as follows:\\
(1) Choose the best relay such that $r^*=\arg \displaystyle\max_{r\in\{1,\ldots,R\}} \frac{\displaystyle\max_{n\in\{1,\ldots,N_s\}}|f_{n,r}|^2\|\boldsymbol{g}_{r}\|^2}{P_2
\|\boldsymbol{g}_{r}\|^2\!\mathcal{N}_1\!+\!
N_d\mathcal{N}_2(\sigma_{\xi_r}^2\!P_1\!+\!\mathcal{N}_1\!)}$.\\
(2) After finding $r^*$, choose the $n^*$th antenna at the source as the best antenna, such that $n^*=\arg \displaystyle\max_{n\in\{1,\ldots,N_s\}} |f_{n,r^*}|^2$.

\subsection{Power Allocation with CSI at the Source and No CSI at Relays}
Here, we study the situation in which the CSI of the source-relay
links are known at the source node, when no power allocation is used at the relays. In this case, we employ DSTC with uniform power allocation at the relays. 

From \eqref{16s} and by assuming the equal power allocation among relays is used, i.e., $P_{2,r}=P_2$, $r=1,\ldots,R$, we have $\text{SNR}_{\text{ins}}=\sum_{n=1}^{N_s}P_{1,n}\theta_n$ where $\theta_n$ can be rewritten as
\begin{align}\label{16ss}
    \theta_n=\frac{\sum_{r=1}^{R}|f_{n,r}|^2\frac{P_2}{\sigma_{f_{n^{\!*}\!,r}}^2\!P_1\!+\!\mathcal{N}_1}
    \|\boldsymbol{g}_{r}\|^2}{N_d\sum_{k=1}^R\frac{P_2}{\sigma_{f_{n^{\!*}\!,k}}^2\!P_1\!+\!\mathcal{N}_1}\|
    \boldsymbol{g}_{k}\|^2\mathcal{N}_1+N_d\mathcal{N}_2},
\end{align}
where $\sigma_{f_{n^{\!*}\!,r}}^2$ is the mean of the random variable $|f_{n^*,r}|^2$, and $n^*$ denotes the index of the selected antenna at the source.

Similar to the optimization problem stated in \eqref{12g}, we can find the optimal value of $P_{1,n}$ from \eqref{e}. Moreover, by defining $\tilde{\theta}_n$ as
\begin{align}\label{16s2}
    \tilde{\theta}_n=\sum_{r=1}^{R}\frac{|f_{n,r}|^2\|\boldsymbol{g}_{r}\|^2}{\sigma_{f_{n^{\!*}\!,r}}^2\!P_1\!+\!\mathcal{N}_1},
\end{align}
we can equivalently find the optimal value of
$P_{1,n}$ given by
\begin{align}\label{e3}
    P_{1,n}^*=\left\{\begin{array}{cc}
                                                P_1, & \text{if}\,\,n=\arg \displaystyle\max_{i\in \{1,\ldots,N_s\}}
                                                \tilde{\theta}_i
\\
0, & \text{otherwise} \\
                                              \end{array}
    \right.
\end{align}
Therefore, the whole power in the first phase is transmitted by the antenna at the source with the highest value of $\tilde{\theta}_n=\sum_{r=1}^{R}\frac{|f_{n,r}|^2\|\boldsymbol{g}_{r}\|^2}{\sigma_{f_{n^{\!*}\!,r}}^2\!P_1\!+\!\mathcal{N}_1}$. The transmission power $P_2$ in the second phase can be chosen equally as $\frac{P}{2R}$.

Note that the process selection of the best antenna at the source can be done at the destination in which we have access to the CSI. Then, the index of the selected antenna at the source is fed back to the source.

\subsection{Power Control between Two Phases}
In the following proposition, we derive the optimal value for the
transmitted power in the two phases when backward and forward
channels have different variances by maximizing the average SNR at
the destination. 

\begin{proposition}\label{a}
Assume $\tau$ portion of the total power is transmitted in the
first phase and the remaining power is transmitted by relays at the
second phase, where $0<\tau<1$, that is $P_1=\tau P$ and
$P_2=(1-\tau) P$, where $P$ is the total transmitted power during
two phases. Assuming $\sigma_{f_r}^2=\sigma_{f}^2$ and
$\sigma_{g_r}^2=\sigma_{g}^2$, the optimum value of $\tau$ by
maximizing the average SNR at the destination is
\begin{equation}\label{5x}
\tau=\frac{\mathcal{N}_1\sigma_g^2P+\mathcal{N}_1\mathcal{N}_2}
{(\mathcal{N}_2\sigma_f^2-\mathcal{N}_1\sigma_g^2)\,P}\left(\sqrt{1+\frac{(\mathcal{N}_2\sigma_f^2
-\mathcal{N}_1\sigma_g^2)\,P}{\mathcal{N}_1\sigma_g^2P+\mathcal{N}_1\mathcal{N}_2}}-1\right).
\end{equation}
\end{proposition}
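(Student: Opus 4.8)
The plan is to reduce the average received SNR to an explicit rational function of $\tau$ and then optimize by elementary calculus. I would start from the instantaneous SNR of the selected relay in \eqref{17b}. Because the proposition assumes $\sigma_{f_r}^2=\sigma_f^2$ and $\sigma_{g_r}^2=\sigma_g^2$ for every $r$, the relays are statistically identical, so the $\tau$-dependence of the average SNR is obtained by replacing the random channel energies $\|\boldsymbol{f}_r\|^2$ and $\|\boldsymbol{g}_r\|^2$ with their means $\mathbb{E}[\|\boldsymbol{f}_r\|^2]=N_s\sigma_f^2$ and $\mathbb{E}[\|\boldsymbol{g}_r\|^2]=N_d\sigma_g^2$, each $f_{i,r}$ and $g_{r,j}$ being zero-mean complex Gaussian with the stated variance. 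Substituting these into \eqref{17b} and cancelling the common factor $N_sN_d$ from numerator and denominator gives
\begin{equation*}
\overline{\text{SNR}}(\tau)=\frac{P_1P_2\,\sigma_f^2\sigma_g^2}{P_2\sigma_g^2\mathcal{N}_1+P_1\sigma_f^2\mathcal{N}_2+\mathcal{N}_1\mathcal{N}_2}.
\end{equation*}

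Next I would insert $P_1=\tau P$ and $P_2=(1-\tau)P$, which casts the objective into the one-parameter form
\begin{equation*}
\overline{\text{SNR}}(\tau)=\frac{c\,\tau(1-\tau)}{a+b\,\tau},\qquad a=\mathcal{N}_1(\sigma_g^2P+\mathcal{N}_2),\;\; b=(\mathcal{N}_2\sigma_f^2-\mathcal{N}_1\sigma_g^2)P,\;\; c=\sigma_f^2\sigma_g^2P^2,
\end{equation*}
where I have collected the $\tau$-linear terms of the denominator and used that its constant part $P\sigma_g^2\mathcal{N}_1+\mathcal{N}_1\mathcal{N}_2$ equals $a$. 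Differentiating in $\tau$ and clearing the positive factors $c$ and $(a+b\tau)^2$, the stationarity condition collapses after cancellation to the quadratic
\begin{equation*}
b\,\tau^2+2a\,\tau-a=0.
\end{equation*}

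Solving the quadratic gives $\tau=\bigl(-a+\sqrt{a(a+b)}\bigr)/b=\tfrac{a}{b}\bigl(\sqrt{1+b/a}-1\bigr)$, and resubstituting the expressions for $a$ and $b$ reproduces the claimed formula \eqref{5x}. To see that this is the correct root and the global maximizer, write $f(\tau)=b\tau^2+2a\tau-a$ and note $f(0)=-a<0$ while $f(1)=a+b=\mathcal{N}_2(\sigma_f^2P+\mathcal{N}_1)>0$, so exactly one stationary point lies in $(0,1)$; since $\overline{\text{SNR}}(0)=\overline{\text{SNR}}(1)=0$ and $\overline{\text{SNR}}(\tau)>0$ on $(0,1)$, this interior critical point is necessarily the maximum, and no second-derivative test is needed.

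The step that genuinely requires justification is the modelling choice behind the ``average SNR'' objective: replacing the channel energies by their expectations inside \eqref{17b}, rather than forming the expectation of the full max-over-relays instantaneous SNR, which is analytically intractable. Everything downstream is routine algebra; the only verifications worth recording are that $a>0$ and $a+b>0$ for positive noise powers and channel variances, which simultaneously guarantee a well-defined root in $(0,1)$ and the positivity of the denominator $a+b\tau$ throughout the interval.
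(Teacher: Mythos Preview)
Your argument is correct and follows essentially the same route as the paper: form the Jensen-type surrogate for the average SNR, reduce it to the rational function $\tau(1-\tau)/(a+b\tau)$, and solve the resulting quadratic. Two minor differences are worth noting. First, the paper derives the surrogate from \eqref{15} with uniform relay powers $P_{2,r}=P_2/R$, whereas you start from the single-relay expression \eqref{17b}; under the i.i.d.\ assumption both collapse to the same $\overline{\text{SNR}}(\tau)$, so this is only a cosmetic distinction. Second, the paper splits into the cases $\mathcal{N}_2\sigma_f^2\gtrless\mathcal{N}_1\sigma_g^2$ and asserts the optimizer in each, while your sign argument $f(0)=-a<0$, $f(1)=a+b=\mathcal{N}_2(\sigma_f^2P+\mathcal{N}_1)>0$ handles both signs of $b$ at once and simultaneously certifies that the selected root lies in $(0,1)$ and is the maximizer; this is a cleaner way to close the argument than the paper's case split.
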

\begin{proof}
The proof is given in Appendix I. 
\end{proof}

For the special case of
$\mathcal{N}_2\sigma_f^2=\mathcal{N}_1\sigma_g^2$, it can be
seen from \eqref{5x} that $\displaystyle\lim_{\delta\rightarrow
0}\frac{1}{\delta}(\sqrt{1+\delta}-1)=\frac{1}{2}$, where $\delta=\frac{(\mathcal{N}_2\sigma_f^2-\mathcal{N}_1\sigma_g^2)\,P}{\mathcal{N}_1\sigma_g^2P+\mathcal{N}_1\mathcal{N}_2}
$. Hence, the optimum
$\tau$ is equal to $\frac{1}{2}$, which is in compliance with the
result obtained in \cite{jin06b} where assumed
$\mathcal{N}_1=\mathcal{N}_2$ and $\sigma_f^2=\sigma_g^2$.

\section{Performance Analysis}
\subsection{SER Expression of Relay Network with No CSI at the Source}
In the previous section, we have shown that the optimum transmitted
power of A$\&$F DSTC system based on maximizing the instantaneous
received SNR at the destination led to opportunistic relaying. In this section,
we will derive the SER formulas of best relay selection strategy
using A\&F. For this reason, we should first
derive the PDF of the received SNR at the destination due to the
$r$th relay, when other relays are silent, that is
\begin{align}\label{24k}
\gamma_r=\frac{P_1P_2\|\boldsymbol{f}_{r}\|^2\|\boldsymbol{g}_{r}\|^2}{P_2N_s
\|\boldsymbol{g}_{r}\|^2\!\mathcal{N}_1\!+\!N_s
N_d\mathcal{N}_2(\sigma_{f_r}^2P_1\!+\!\mathcal{N}_1\!)}.
\end{align}
Now, we will derive the PDF of $\gamma_r$, which is
required for calculating the average SER.
\begin{proposition}
For $\gamma_r$ in \eqref{24k}, the PDF
$p_{r}(\gamma)$ can be written as
\begin{align}\label{24}
    p_{r}(\gamma)&\!=\!\sum_{k=0}^{N_s}\!\frac{2a^k\overline{Y}_{\!r}^k\!{N_s \choose k}e^{-\frac{a
\gamma}{\overline{X}_{\!r}}}}{\gamma\,(N_d\!\!-\!\!1)!\,(N_s\!\!-\!\!1)!\,b_{r}^k}\!\left(\!\frac{b_{r}\gamma
}{\overline{X}_{\!r}\!\overline{Y}_{\!r}}\!\right)^{\!\!\!\frac{\mu+k}{2}}\!\!\!\!\!
    K_{\nu+k}\!\!\left(\!2\sqrt{\!\frac{b_{r}\gamma
}{\overline{X}_{\!r}\!\overline{Y}_{\!r}}}\right)\!,
\end{align}
where $\mu=N_s+N_d$, $\nu=N_d-N_s$,
$\overline{X}_{\!r}=N_s\sigma_{f_r}^2$,
$\overline{Y}_{\!r}=N_d\sigma_{g_r}^2$,
$a=\frac{N_s\mathcal{N}_1}{P_1}$, $b_r= \frac{N_s
N_d\mathcal{N}_2(\sigma_{f_r}^2P_1+\mathcal{N}_1)}{P_1P_2}$,
$K_n(x)$ is the modified Bessel function of the second kind of order
$n$.
\end{proposition}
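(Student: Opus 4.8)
The plan is to realize $p_r(\gamma)$ as the pushforward of the joint law of the two independent channel gains $X_r:=\|\boldsymbol{f}_r\|^2$ and $Y_r:=\|\boldsymbol{g}_r\|^2$ under the scalar map \eqref{24k}. Since $\boldsymbol{f}_r$ has $N_s$ i.i.d.\ $\mathcal{CN}(0,\sigma_{f_r}^2)$ entries and $\boldsymbol{g}_r$ has $N_d$ i.i.d.\ $\mathcal{CN}(0,\sigma_{g_r}^2)$ entries, $X_r$ and $Y_r$ are independent Gamma (scaled chi-square) variables of shapes $N_s$ and $N_d$, with densities $p_{X_r}(x)=\frac{x^{N_s-1}}{(N_s-1)!\,\overline{X}_r^{N_s}}e^{-x/\overline{X}_r}$ and $p_{Y_r}(y)=\frac{y^{N_d-1}}{(N_d-1)!\,\overline{Y}_r^{N_d}}e^{-y/\overline{Y}_r}$, with $\overline{X}_r,\overline{Y}_r$ as in the statement. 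The single most useful simplification is to divide the numerator and denominator of \eqref{24k} by $P_1P_2$, which collapses $\gamma_r$ into the compact ratio form $\gamma_r=\dfrac{X_rY_r}{aY_r+b_r}$, with $a$ and $b_r$ exactly the constants declared in the proposition; this exposes the product/ratio structure that produces the Bessel behaviour.

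Next I would carry out the change of variables by conditioning on $Y_r=y$ and solving \emph{linearly} for $X_r$: from $\gamma=\frac{X_ry}{ay+b_r}$ one gets $X_r=\gamma\left(a+\tfrac{b_r}{y}\right)$, with Jacobian $\left|\tfrac{\partial X_r}{\partial\gamma}\right|=a+\tfrac{b_r}{y}$ (this is why conditioning on $Y_r$, rather than on $X_r$, is preferable: solving for $Y_r$ would give a messier Jacobian and a domain constraint $X_r>\gamma a$). Integrating against $p_{Y_r}$ gives
\[
p_r(\gamma)=\int_0^\infty p_{X_r}\!\left(\gamma\!\left(a+\tfrac{b_r}{y}\right)\right)\left(a+\tfrac{b_r}{y}\right)p_{Y_r}(y)\,dy.
\]
Substituting the Gamma density, the factor $x^{N_s-1}$ times the Jacobian becomes $\gamma^{N_s-1}\!\left(a+\tfrac{b_r}{y}\right)^{N_s}$, and the exponential factors as $e^{-a\gamma/\overline{X}_r}e^{-b_r\gamma/(y\overline{X}_r)}$, so the $\gamma$-only exponential pulls out of the integral. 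Expanding $\left(a+\tfrac{b_r}{y}\right)^{N_s}=\sum_{k}\binom{N_s}{k}a^{N_s-k}b_r^{k}y^{-k}$ is the source of the finite sum in \eqref{24}.

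The technical heart, and the step I expect to be the main obstacle, is the remaining $y$-integral for each $k$, namely $\int_0^\infty y^{\,N_d-k-1}\,e^{-y/\overline{Y}_r-\,b_r\gamma/(y\overline{X}_r)}\,dy$, which must be recognized as the classical $K$-Bessel kernel $\int_0^\infty t^{\,n-1}e^{-\alpha t-\beta/t}\,dt=2(\beta/\alpha)^{n/2}K_n\!\left(2\sqrt{\alpha\beta}\right)$ and evaluated with $\alpha=1/\overline{Y}_r$, $\beta=b_r\gamma/\overline{X}_r$, $n=N_d-k$; tracking the $(\beta/\alpha)^{n/2}$ prefactor correctly is the delicate part. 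This yields a $K_{N_d-k}$ term with argument $2\sqrt{b_r\gamma/(\overline{X}_r\overline{Y}_r)}$. Finally I would reindex $k\mapsto N_s-k$ and use the parity $K_{-\nu}=K_{\nu}$ to rewrite $K_{N_d-k}$ as $K_{\nu+k}$ with $\nu=N_d-N_s$, then collect the powers of $\gamma,b_r,\overline{X}_r,\overline{Y}_r$; a short exponent check against $\mu=N_s+N_d$ confirms that the power of $\gamma$ is $\tfrac{\mu+k}{2}-1$, of $b_r$ is $\tfrac{\mu-k}{2}$, of $\overline{Y}_r$ is $\tfrac{k-\mu}{2}$, and of $\overline{X}_r$ is $-\tfrac{\mu+k}{2}$, reproducing \eqref{24} exactly. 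This closing bookkeeping is routine but must be executed carefully to land on the stated closed form.
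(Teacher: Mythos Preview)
Your proposal is correct and follows essentially the same route as the paper's proof in Appendix~II: identify $X_r,Y_r$ as independent Gamma variables, rewrite $\gamma_r=X_rY_r/(aY_r+b_r)$, condition on $Y_r$, expand the $N_s$-th power binomially, and evaluate the remaining $y$-integral via the $K$-Bessel identity $\int_0^\infty t^{n-1}e^{-\alpha t-\beta/t}\,dt=2(\beta/\alpha)^{n/2}K_n(2\sqrt{\alpha\beta})$. The only cosmetic difference is that the paper first writes the CDF (using the incomplete Gamma function) and then differentiates, whereas you compute the PDF directly via the Jacobian; both lead to the same integrand, and your explicit reindexing $k\mapsto N_s-k$ together with $K_{-\nu}=K_\nu$ is exactly what the paper leaves implicit in its final ``yielding \eqref{24}.''
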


\begin{proof}
The proof is given in Appendix II. 
\end{proof}

Define $\gamma_{\max}\triangleq\max\left\{\gamma_1,
\gamma_2,\ldots,\gamma_R\right\}$. The conditional SER of the best
relay selection system under A$\&$F mode
with \emph{R} relays can be written as 
\begin{equation}\label{25}
    P_e\left(R|\boldsymbol{F},\boldsymbol{G}\right)=c\,Q\left(\sqrt{g\,\gamma_{\max}}\right),
\end{equation}
where $Q(x)=1/\sqrt{2\pi}\int_x^\infty e^{-u^2/2}\,du$, and parameters $c$ and $g$ are represented as
\begin{equation*}
    c_{\text{QAM}}\!=\!4\frac{\sqrt{M}-1}{\sqrt{M}},\,\,c_{\text{PSK}}\!=\!2,\,\,g_{\text{QAM}}\!=\!\frac{3}{M-1},
    \,\,g_{\text{PSK}}\!=\!2\sin^2\!\!\left(\!\frac{\pi}{M}\!\right).
\end{equation*}

Using the result from order statistics, and by assuming that all
channel coefficients are independent of each other, the PDF of
$\gamma_{\max}$ can be written as
\begin{align}\label{28}
p_{\max}(\gamma)=\sum_{r=1}^{R}p_r(\gamma){\prod_{\underset{j\neq
r}{j=1}}^R}
    \text{Pr}\{\gamma_j<\gamma\},
\end{align}
where $\text{Pr}\{\gamma_{j}<\gamma\}$ can be evaluated as
\begin{align}\label{21c}
    &\text{Pr}\{\gamma_j<\gamma\}=\text{Pr}\{XY/(aY+b_j)<\gamma\}
    \nonumber\\
    &=\int_0^\infty
    \left(1-e^{-\frac{x}{\overline{X}_{\!j}}}\sum_{n=0}^{N_s-1}\frac{1}{n!}\left(\frac{x}{\overline{X}_{\!j}}\right)^{\!\!n}\right)\frac{y^{N_d-1}}{(N_d-1)!\,\overline{Y}_{\!j}^{N_d}}e^{-\frac{y}{\overline{Y}_{\!j}}}dy
    \nonumber\\
    &=1-\int_0^\infty
    \sum_{n=0}^{N_s-1}\left(\frac{\gamma(ay+b_j)}{y\overline{X}_j}\right)^{\!\!n}
    \frac{e^{-\frac{\gamma(ay+b_j)}{y\overline{X}_j}}y^{N_d-1}}{n!\,(N_d-1)!\,\overline{Y}_j^{N_d}}e^{-\frac{y}{\overline{Y}_j}}dy
    \nonumber\\
    &=1\!-\!\!\sum_{n=0}^{N_s\!-\!1}\!\sum_{k=0}^{n}\!\frac{{n \choose k}a^k b_{j}^{n-k}\gamma
    ^n
    e^{-\frac{a\gamma}{\overline{X}_{\!j}}}}{n!\,(N_d\!-\!1)!\,\overline{X}_{\!j}^n\overline{Y}_{\!j}^{N_d}}\!\int_0^\infty
    \!\!\!\!
    y^{N_d+k-n-1}e^{-\frac{y}{\overline{Y}_{\!j}}-\frac{b_{j}\gamma }{y\overline{X}_{\!j}}}dy
    \nonumber\\
    &=\!1\!-\!\!\!\sum_{n=0}^{N_s\!-\!1}\!\sum_{k=0}^{n}\!\!\frac{2{n \choose k}a^k \overline{Y}_{\!j}^{k}
    e^{-\frac{a\gamma}{\overline{X}_{\!j}}}}{n!\,(N_d\!-\!1)!\,b_{j}^{k}}\!\left(\!\frac{b_{j}\gamma
}{\overline{X}_{\!j}\!\overline{Y}_{\!j}}\!\right)^{\!\!\!\frac{N_d+n+k}{2}}\!\!\!\!\!\!\!\!\!
    K_{\!N_d-n+k}\!\!\left(\!2\sqrt{\!\frac{b_{j}\gamma
}{\overline{X}_{\!j}\!\overline{Y}_{\!j}}}\right)\!,
\end{align}
where $x=\frac{\gamma(ay+b_{j})}{y\overline{X}_{\!j}}$, and in the
second equality, we used the Erlang distribution \cite[Eq. (3.48)]{leo94}.

Now, we are deriving the SER expression for the selection relaying
scheme discussed in Section III. Averaging over conditional SER
$P_e\left(R|\boldsymbol{F},\boldsymbol{G}\right)$, we have the exact
SER expression as
\begin{align}\label{29}
    P_e(R)&=\int_{0}^\infty
    P_e\left(R|\boldsymbol{F},\boldsymbol{G}\right)\,p_{\max}(\gamma)\,d\gamma
    \nonumber\\
    &=\int_{0}^\infty
    c\,Q\left(\sqrt{g\,\gamma}\right)\,p_{\max}(\gamma)\,d\gamma.
\end{align}

\subsection{SER Expression of Relay Network with Partial CSI at the Source}
In this subsection,
we will derive the SER formulas of the best relay selection strategy
under the amplify-and-forward mode, when the source-relays CSI are available at the source. For this reason, we should first derive the PDF of the received SNR at the destination due to the $r$th relay, when other relays are silent. That is, from \eqref{e}, \eqref{24kkk}, and \eqref{e2}, we can rewrite
$\text{SNR}_{\text{ins}}=\sum_{r=1}^{R}\zeta_r$ as $\text{SNR}_{\text{ins}}=\max_{r\in \{1,\ldots,R\}}\zeta_r$ where
\begin{align}\label{24kk}
\zeta_r=\frac{P_1P_2\displaystyle\max_{n\in\{1,\ldots,N_s\}}|f_{n,r}|^2\|\boldsymbol{g}_{r}\|^2}{P_2
\|\boldsymbol{g}_{r}\|^2\!\mathcal{N}_1\!+\!
N_d\mathcal{N}_2(\sigma_{\xi_r}^2P_1\!+\!\mathcal{N}_1\!)}.
\end{align}

In the following, we will derive the PDF of $\zeta_r$, which is
required for calculating the average SER.
\begin{proposition}
For $\zeta_r$ in \eqref{24kkk}, the probability density function
$p_{\zeta_r}(\zeta)$ can be written as
\begin{align}\label{24x}
    p_{\zeta_r}(\zeta)\!=&\!\sum_{k=1}^{N_s}\!\frac{2(-1)^{k+1}\!{N_s \choose k}k \,\alpha\, e^{-\frac{\alpha
\zeta k}{\sigma_{f_r}^2}}}{(N_d\!\!-\!\!1)!\,\sigma_{f_r}^2}\!\left(\!\frac{\beta_{r}k\zeta
}{\sigma_{f_r}^2\overline{Y}_{\!r}}\!\right)^{\!\!\!\frac{N_d}{2}}\!\!\!\!\!
    K_{N_d}\!\!\left(\!2\sqrt{\!\frac{k \beta_{r}\zeta
}{\sigma_{f_r}^2\!\overline{Y}_{\!r}}}\right)\!
\nonumber\\
    &+\sum_{k=1}^{N_s}\!\frac{2(-1)^{k+1}\!{N_s \choose k} k \,\beta_r e^{-\frac{\alpha
\zeta k}{\sigma_{f_r}^2}}}{(N_d\!\!-\!\!1)!\,\sigma_{f_r}^2\overline{Y}_{\!r}}\!\left(\!\frac{\beta_{r}k\zeta
}{\sigma_{f_r}^2\overline{Y}_{\!r}}\!\right)^{\!\!\!\frac{N_d\!-1}{2}}\!\!\!\!\!
    K_{N_d\!-1}\!\!\left(\!2\sqrt{\!\frac{k \beta_{r}\zeta
}{\sigma_{f_r}^2\!\overline{Y}_{\!r}}}\right)\!,
\end{align}
where $\alpha=\frac{\mathcal{N}_1}{P_1}$ and $\beta_r= \frac{
N_d\mathcal{N}_2(\sigma_{\zeta_r}^2P_1+\mathcal{N}_1)}{P_1P_2}$.
\end{proposition}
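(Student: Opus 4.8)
The plan is to recognize that $\zeta_r$ in \eqref{24kk} has exactly the same algebraic form as $\gamma_r$ in the previous proposition. Dividing numerator and denominator by $P_1P_2$ and using the definitions of $\alpha$ and $\beta_r$ given in the statement puts $\zeta_r$ in the form $\zeta_r=XY/(\alpha Y+\beta_r)$, where $Y=\|\boldsymbol{g}_r\|^2$ is Erlang-distributed with $N_d$ degrees of freedom and $X=\xi_r=\max_{n\in\{1,\ldots,N_s\}}|f_{n,r}|^2$. The only---but decisive---difference from the earlier derivation is that $X$ is now a \emph{maximum} of $N_s$ i.i.d.\ exponentials rather than their sum, and this is what changes the shape of the answer.

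First I would compute the law of $X$. Since each $|f_{n,r}|^2$ is exponential with mean $\sigma_{f_r}^2$, the CDF of the maximum is $(1-e^{-x/\sigma_{f_r}^2})^{N_s}$; differentiating and expanding the factor $(1-e^{-x/\sigma_{f_r}^2})^{N_s-1}$ by the binomial theorem gives
$$f_X(x)=\frac{1}{\sigma_{f_r}^2}\sum_{k=1}^{N_s}(-1)^{k+1}k{N_s \choose k}e^{-kx/\sigma_{f_r}^2},$$
a finite sum of exponentials with rates $k/\sigma_{f_r}^2$. This is precisely the structural reason the final density \eqref{24x} is so much simpler than \eqref{24}: there is no Erlang polynomial factor in $X$ to expand, so only two Bessel orders will survive instead of a full sum over orders.

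Next I would obtain $p_{\zeta_r}$ by conditioning on $Y$. For fixed $Y=y$ the map $X\mapsto\zeta_r=\frac{y}{\alpha y+\beta_r}X$ is a linear rescaling, so
$$p_{\zeta_r}(\zeta)=\int_0^\infty \frac{\alpha y+\beta_r}{y}\,f_X\!\left(\frac{\zeta(\alpha y+\beta_r)}{y}\right)f_Y(y)\,dy,$$
with $f_Y(y)=y^{N_d-1}e^{-y/\overline{Y}_r}/((N_d-1)!\,\overline{Y}_r^{N_d})$. Inserting $f_X$, I would factor the $k$-sum together with $e^{-k\alpha\zeta/\sigma_{f_r}^2}$ outside the integral, and split $\frac{\alpha y+\beta_r}{y}=\alpha+\beta_r/y$. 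This reduces the problem to two integrals of the form $\int_0^\infty y^{\nu-1}e^{-y/\overline{Y}_r-k\beta_r\zeta/(y\sigma_{f_r}^2)}\,dy$, with $\nu=N_d$ coming from the $\alpha$ term and $\nu=N_d-1$ from the $\beta_r/y$ term.

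Each integral is then evaluated by the standard identity $\int_0^\infty y^{\nu-1}e^{-qy-p/y}\,dy=2(p/q)^{\nu/2}K_\nu(2\sqrt{pq})$, which produces the modified Bessel functions $K_{N_d}$ and $K_{N_d-1}$ sharing the common argument $2\sqrt{k\beta_r\zeta/(\sigma_{f_r}^2\overline{Y}_r)}$. The main obstacle is the bookkeeping at this last stage: one must verify that the $\overline{Y}_r^{-N_d}$ from $f_Y$ combines with the $(p/q)^{\nu/2}$ factor so that the surviving powers of $\overline{Y}_r$ and $\sigma_{f_r}^2$ collapse to the stated prefactors $(\beta_r k\zeta/(\sigma_{f_r}^2\overline{Y}_r))^{N_d/2}$ and $\overline{Y}_r^{-1}(\beta_r k\zeta/(\sigma_{f_r}^2\overline{Y}_r))^{(N_d-1)/2}$. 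Carrying this collection out term by term reproduces the two sums in \eqref{24x}. The integral identity itself is routine; essentially all of the genuine work lies in the max-to-exponential-sum step and in this final matching of constants.
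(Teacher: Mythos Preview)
Your proposal is correct and follows essentially the same route as the paper's Appendix~III: both identify $\zeta_r=XY/(\alpha Y+\beta_r)$ with $X=\max_n|f_{n,r}|^2$ and $Y=\|\boldsymbol{g}_r\|^2$, condition on $Y$, expand the max-of-exponentials law via the binomial theorem, split the resulting integrand into an $\alpha$ piece and a $\beta_r/y$ piece, and close each with the standard identity $\int_0^\infty y^{\nu-1}e^{-qy-p/y}\,dy=2(p/q)^{\nu/2}K_\nu(2\sqrt{pq})$. The only cosmetic difference is ordering: the paper first writes the CDF $\text{Pr}\{\zeta_r<\zeta\}$ as an integral over $y$ and then differentiates under the integral sign in $\zeta$ (which is what produces the factor $(\alpha y+\beta_r)/y$ and hence the two Bessel orders), whereas you differentiate the CDF of $X$ first to obtain $f_X$ and then condition; the two computations coincide line for line after that step.
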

\begin{proof}
The proof is given in Appendix III. 
\end{proof}

Let $\zeta_{\max}\triangleq\max\left\{\zeta_1,
\zeta_2,\ldots,\zeta_R\right\}$. The conditional SER of the best
relay selection system under A$\&$F mode
with \emph{R} relays and partial CSI at the source can be written as 
\begin{equation}\label{25b}
    P_e\left(R|\boldsymbol{F},\boldsymbol{G}\right)=c\,Q\left(\sqrt{g\,\zeta_{\max}}\right).
\end{equation}
From \eqref{28}, the PDF of $\zeta_{\max}$ can be written as
\begin{align}\label{28o}
p_{\zeta_{\max}}(\zeta)=\sum_{r=1}^{R}p_{\zeta_r}(\zeta){\prod_{\underset{j\neq
r}{j=1}}^R}
    \text{Pr}\{\zeta_j<\zeta\}.
\end{align}
where $\text{Pr}\{\zeta_{j}<\zeta\}$ can be evaluated by solving
the integral in the last equality of \eqref{21m} using \cite[Eq. (3.471)]{gra96} as
\begin{align}\label{24o}
    \text{Pr}\{\zeta_r<\zeta\}&=1+\sum_{k=1}^{N_s}\frac{2(-1)^k{N_s \choose k}\,e^{-\frac{\alpha
\zeta k}{\sigma_{f_r}^2}}}{(N_d\!\!-\!\!1)!}\!\left(\!\frac{\beta_{r}k\zeta
}{\sigma_{f_r}^2\overline{Y}_{\!r}}\!\right)^{\!\!\!\frac{N_d}{2}}\!\!\!\!\!
    K_{N_d}\!\!\left(\!2\sqrt{\!\frac{k \beta_{r}\zeta
}{\sigma_{f_r}^2\!\overline{Y}_{\!r}}}\right)\!.
\end{align}

Now, we are deriving the SER expression for the selection relaying
scheme discussed in Subsection III-C. Averaging over conditional SER
$P_e\left(R|\boldsymbol{F},\boldsymbol{G}\right)$, we have the exact
SER expression as
\begin{align}\label{29o}
    P_e(R)&=\int_{0}^\infty
    P_e\left(R|\boldsymbol{F},\boldsymbol{G}\right)\,p_{\zeta_{\max}}(\zeta)\,d\zeta
    \nonumber\\
    &=\int_{0}^\infty
    c\,Q\left(\sqrt{g\,\zeta}\right)\,p_{\zeta_{\max}}(\zeta)\,d\zeta.
\end{align}

\subsection{SER Expression of Relay Network with Antenna Selection at the Source and No CSI at the Relays}
Here, we study the performance analysis of the relaying scheme presented in Subsection III-C. From \eqref{16ss} and \eqref{e3}, we can write the instantaneous received SNR at the destination as $\text{SNR}_{\text{ins}}=\sum_{r=1}^{R}\eta_r$ where $\eta_r$ is defined as
\begin{equation}\label{35x}
    \eta_r=\frac{|f_{n^{\!*}\!,r}|^2\frac{P_2}{\sigma_{f_{n^{\!*}\!,r}}^2\!P_1\!+\!\mathcal{N}_1}
    \|\boldsymbol{g}_{r}\|^2}{N_d\displaystyle\sum_{k=1}^R\frac{P_2}{\sigma_{f_{n^{\!*}\!,k}}^2\!P_1\!+\!\mathcal{N}_1}\|
    \boldsymbol{g}_{k}\|^2\mathcal{N}_1+N_d\mathcal{N}_2}.
\end{equation}

Now, we can use the moment generating function (MGF) to derive the average SER expression for
the relay network discussed in Subsection III-C.
The conditional SER of the the A$\&$F DSTC with the antenna selection at the source can be given by
\begin{equation}\label{35w}
    P_e\left(R|\boldsymbol{F},\boldsymbol{G}\right)=c\,Q\!\left(\sqrt{g\,\sum_{r=1}^R\eta_r}\right).
\end{equation}
Since the $\eta_r$s are independent, the average SER would be
\begin{align}\label{36v}
P_e(R)&=\int_{0;R-\text{fold}}^\infty\!
    \!P_e\left(R|\boldsymbol{F},\boldsymbol{G}\right)\prod_{r=1}^R\!\left(p(\eta_r)\,d\eta_r\right)
=\int_{0;R-\text{fold}}^\infty\!
    \!c\,Q\!\!\left(\!\sqrt{g\,\sum_{r=1}^R\eta_r}\right)\prod_{r=1}^R\!\left(p(\eta_r)\,d\eta_r\right).
\end{align}
Using the moment generating function approach, we get
\begin{align}\label{37x}
P_e(R)&=\!\int_{0;\,R-\text{fold}}^\infty
\!\frac{c}{\pi}\!\!\int_0^{\frac{\pi}{2}} \!\!\prod_{r=1}^R
e^{-\frac{g\,\eta_r}{2\sin
^2\phi}}\,d\phi\prod_{r=1}^R\left(p(\eta_r)\,d\eta_r\right)
=\frac{c}{\pi}\!\!\int_0^{\frac{\pi}{2}}\prod_{r=1}^R
M_r\left(-\frac{g}{2\sin ^2\phi}\right)d\phi,
\end{align}
where $M_r(s)=\mathbb{E}_\gamma\{e^{s \eta_r}\}$ is the MGF of $\eta_r$ in \eqref{35x}.

\section{Asymptotic SER Expression}
Now, we are going to derive a closed-form SER formula at the
destination, which is valid in the high SNR regime.

\subsection{Asymptotic SER Expression of Relay Network with No CSI at the Source}
\subsubsection{Case of $N_d\neq N_s$} Here, a closed-form SER formula
for the case of $N_d\neq N_s$ is derived for high SNR scenarios.
This simple expression can be used for a power allocation strategy
among the cooperative nodes, or to get an insight on the
diversity-multiplexing tradeoff of the system.

Using the fact that $K_0(x)\approx-\ln\left(x\right)$ \cite[Eq.
(9.6.8)]{abr72}, as $x\rightarrow 0$, the $p_{r}(\gamma)$ in
\eqref{24} can be approximated as
\begin{align}\label{24d}
    p_{r}(\gamma)&\approx\sum_{k=0}^{N_s}\!\frac{{N_s \choose k}a^k \overline{Y}_{\!r}^{k}\Gamma(N_d-N_s+k)}{ (N_s\!-\!1)!\,(N_d\!-\!1)!\,b_{r}^{k}}
    \frac{e^{-\frac{a\gamma}{\overline{X}_{\!r}}}}{\gamma}\left(\!\frac{b_{r}\gamma
}{\overline{X}_{\!r}\overline{Y}_{\!r}}\!\right)^{\!N_s}
    \!,
\end{align}
for $N_d> N_s$, and using
$K_\nu(x)\approx\frac{1}{2}\Gamma(\nu)\left(\frac{x}{2}\right)^{-\nu}
, \,\nu\neq0$ \cite[Eq. (9.6.9)]{abr72}, as $x\rightarrow 0$, where
$\Gamma(\nu)$ is gamma function of order $\nu$, and
$K_\nu(x)=K_{-\nu}(x)$ \cite[Eq. (9.6.6)]{abr72}, for $N_d< N_s$,
$p_{r}(\gamma)$ can be approximated as
\begin{align}\label{24e}
    p_{r}(\gamma)\!\approx &\!\!\!\sum_{k=0}^{N_s-N_d-1}\!\!\frac{{N_s \choose k}a^k \overline{Y}_{\!r}^{k}\Gamma(N_s\!-\!N_d\!-\!k)}{ (N_s\!-\!1)!\,(N_d\!-\!1)!\,b_{r}^{k}}
    \frac{e^{-\frac{a\gamma}{\overline{X}_{\!r}}}}{\gamma}\left(\!\frac{b_{r}\gamma
}{\overline{X}_{\!r}\overline{Y}_{\!r}}\!\right)^{\!\!k+\!N_d}
    \nonumber\\
&-\frac{2{N_s \choose N_s\!-\!N_d}(a
\overline{Y}_{\!r})^{N_s\!-\!N_d}}{
(N_s\!-\!1)!\,(N_d\!-\!1)!\,b_{r}^{N_s\!-\!N_d}}\frac{e^{-\frac{a\gamma}{\overline{X}_{\!r}}}}{\gamma}\left(\!\frac{b_{r}\gamma
}{\overline{X}_{\!r}\overline{Y}_{\!r}}\!\right)^{\!\!N_s}\!\ln\!\left(\!2\sqrt{\frac{b_{r}\gamma
}{\overline{X}_{\!r}\overline{Y}_{\!r}}}\right)
    \nonumber\\
&\!+\!\!\!\!\sum_{k=N_s-N_d+1}^{N_s}\!\!\!\!\!\frac{{N_s \choose
k}a^k \overline{Y}_{\!r}^{k}\Gamma(N_d\!-\!N_s\!+\!k)}{
(N_s\!-\!1)!\,(N_d\!-\!1)!\,b_{r}^{k}}\frac{e^{-\frac{a\gamma}{\overline{X}_{\!r}}}}{\gamma}\left(\!\frac{b_{r}\gamma
}{\overline{X}_{\!r}\overline{Y}_{\!r}}\!\right)^{\!\!N_s}
    \!\!.
\end{align}

Before deriving the asymptotic expression for SER, we present two
lemmas.

\begin{lemma}
Let $N\triangleq\min\left\{N_{s},N_{d}\right\}$ and $N_{s}\neq
N_{d}$. The ($N-1$)th order derivative of $p_{r}(\gamma)$ with
respect to $\gamma$ at zero is computed as
\begin{align}\label{8e}
    \Phi_{N_s,N_d,r}\triangleq\frac{\partial^{N-1}
p_r}{\partial\gamma^{N-1}}(0)&=\left\{\begin{array}{cc}
                                                \displaystyle \sum_{k=0}^{N_s}\frac{{N_s \choose k}a^k b_{r}^{N_s-k}\,\Gamma(N_d\!-\!N_s\!+k)}
                                                { (N_d\!-\!1)!\,\overline{X}_{\!r}^{\!N_s}\overline{Y}_{\!r}^{N_s-k}}, & \text{if}\,\, N\!=\!N_s,
\\
\frac{\Gamma(N_s-N_d)}{ (N_s\!-\!1)!}\left(\!\frac{b_{r}
}{\overline{X}_{\!r}\overline{Y}_{\!r}}\!\right)^{\!N_d}, & \text{if}\,\, N\!=\!N_d.
                                              \end{array}
    \right.
\end{align}
Furthermore, the $n$th ($n < N-1$) order derivatives of
$p_{r}(\gamma)$ with respect to $\gamma$ at zero are null.
\end{lemma}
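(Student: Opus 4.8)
The plan is to read the leading small-$\gamma$ behaviour of $p_r(\gamma)$ off the asymptotic expansions already established in \eqref{24d} and \eqref{24e}, and then extract the Taylor coefficient of $\gamma^{N-1}$. The guiding observation is elementary: if one shows that near the origin $p_r(\gamma)=C\,\gamma^{N-1}+o(\gamma^{N-1})$, then every derivative of order strictly below $N-1$ automatically vanishes at zero, while $\frac{\partial^{N-1}p_r}{\partial\gamma^{N-1}}(0)=(N-1)!\,C$. Thus the whole lemma reduces to locating the lowest power of $\gamma$ present in each expansion and computing its coefficient, the hypothesis $N_s\neq N_d$ being exactly what lets us invoke the clean forms \eqref{24d} (for $N_d>N_s$) and \eqref{24e} (for $N_d<N_s$) rather than a mixed logarithmic case.

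First I would treat the case $N=N_s$, i.e.\ $N_d>N_s$, where every Bessel order $\nu+k=N_d-N_s+k$ is strictly positive and \eqref{24d} applies. In that expansion the index $k$ enters only through the constant coefficients, while each summand carries the same factor $\gamma^{-1}\bigl(b_r\gamma/(\overline{X}_r\overline{Y}_r)\bigr)^{N_s}$, so every term scales as $e^{-a\gamma/\overline{X}_r}\gamma^{N_s-1}$ and the lowest power present is exactly $\gamma^{N_s-1}$. Since $e^{-a\gamma/\overline{X}_r}\to1$ and its expansion only produces corrections of order $\gamma^{N_s},\gamma^{N_s+1},\ldots$, the coefficient $C$ of $\gamma^{N_s-1}$ is obtained by replacing the exponential by unity and summing over $k$. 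Multiplying by $(N_s-1)!$ and collecting powers via $\overline{Y}_r^k b_r^{-k}\bigl(b_r/(\overline{X}_r\overline{Y}_r)\bigr)^{N_s}=b_r^{N_s-k}\overline{X}_r^{-N_s}\overline{Y}_r^{k-N_s}$ then yields the first branch of \eqref{8e}.

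Next I would treat the case $N=N_d$, i.e.\ $N_s>N_d$, where the orders $\nu+k$ pass through negative, zero, and positive values and \eqref{24e} applies. In the first sum the overall power of $\gamma$ is $\gamma^{k+N_d-1}$, so its $k=0$ term is the unique contribution at the minimal order $\gamma^{N_d-1}$; the remaining terms of that sum ($k\ge1$), the logarithmic middle term (scaling as $\gamma^{N_s-1}\ln\gamma$ with $N_s-1\ge N_d$), and the third sum (scaling as $\gamma^{N_s-1}$) are all of order at least $\gamma^{N_d}$ and hence invisible to the $(N_d-1)$th derivative at zero. Reading off the $k=0$ coefficient gives $C=\frac{\Gamma(N_s-N_d)}{(N_s-1)!\,(N_d-1)!}\bigl(b_r/(\overline{X}_r\overline{Y}_r)\bigr)^{N_d}$, and multiplication by $(N_d-1)!$ produces the second branch of \eqref{8e}.

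The main obstacle will be the bookkeeping in the $N=N_d$ case, specifically certifying that the logarithmic middle term, despite its $\ln\gamma$ singularity, is genuinely higher order and contributes nothing to the $(N-1)$th derivative at the origin: because $N_s-1\ge N_d=N$, differentiating $\gamma^{N_s-1}\ln\gamma$ exactly $N-1$ times leaves a factor $\gamma^{N_s-N}$ with $N_s-N\ge1$, which vanishes as $\gamma\to0$, as do the accompanying pure-power corrections and the exponential corrections in both branches. Confirming that the claimed minimal power is the true leading order (not accidentally cancelled) in each branch, together with this differentiability check for the log term, is the only delicate point; the extraction of the coefficient itself is routine algebra.
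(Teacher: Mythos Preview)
Your proposal is correct and follows essentially the same approach as the paper: both arguments read off the small-$\gamma$ behaviour of $p_r(\gamma)$ from the asymptotic expansions \eqref{24d} and \eqref{24e} and extract the leading Taylor coefficient, with the paper summarizing this as ``applying the chain rule'' to those approximations. Your version is simply more explicit about the bookkeeping---in particular your careful check that the $\gamma^{N_s-1}\ln\gamma$ term in \eqref{24e} is genuinely higher order when $N=N_d$---whereas the paper's proof is a two-line sketch.
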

\begin{proof}
By applying the chain rule for differentiating composite functions
into $p_{r}(\gamma)$ in \eqref{24d}-\eqref{24e}, the desired result
in \eqref{8e} is obtained. The second part of the lemma can
straightforwardly be calculated using the same procedure.
\end{proof}

\begin{lemma}
All the derivatives of the PDF of $\gamma_{\max}$, i.e., $p_{\max}$,
evaluated at zero up to order $(NR-1)$ are zero, while the $NR$-th
order derivative is given by
\begin{equation}\label{10g}
\frac{\partial^{\,NR}
p_{\max}}{\partial\gamma^{NR}}(0)=R\prod_{r=1}^{R}\frac{\partial^{N-1}
p_r}{\partial\gamma^{N-1}}(0) ,
\end{equation}

\end{lemma}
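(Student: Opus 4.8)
**

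The plan is to prove Lemma 2 by combining the order-statistics representation of $p_{\max}$ in \eqref{28} with the vanishing-derivative structure established in Lemma 1. The key observation is that near $\gamma = 0$, each factor in the product formula for $p_{\max}$ contributes a known order of vanishing: the density $p_r(\gamma)$ behaves like $\gamma^{N-1}$ (since its first $N-1$ derivatives at zero vanish by Lemma 1), while each cumulative distribution function $\Pr\{\gamma_j < \gamma\}$ behaves like $\gamma^{N}$ (integrating the density picks up one extra power). I would make this precise by writing the leading-order Taylor behavior of each piece as $\gamma \to 0$.

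First I would establish the local expansions. From Lemma 1, since the derivatives of $p_r$ below order $N-1$ are null and the $(N-1)$th derivative equals $\Phi_{N_s,N_d,r}$, Taylor's theorem gives $p_r(\gamma) = \frac{\Phi_{N_s,N_d,r}}{(N-1)!}\gamma^{N-1} + o(\gamma^{N-1})$ as $\gamma \to 0$. Integrating, the CDF satisfies $\Pr\{\gamma_j < \gamma\} = \frac{\Phi_{N_s,N_d,j}}{N!}\gamma^{N} + o(\gamma^{N})$. Substituting these into the $r$th term of \eqref{28}, namely $p_r(\gamma)\prod_{j \neq r}\Pr\{\gamma_j < \gamma\}$, yields a leading power of $\gamma^{(N-1) + (R-1)N} = \gamma^{NR - 1}$. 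Summing over $r$, every term of $p_{\max}$ has leading order $\gamma^{NR-1}$, which shows immediately that all derivatives of $p_{\max}$ at zero up to order $NR - 2$ vanish. Since $p_{\max}(0)$ and the intermediate derivatives inherit this vanishing, I would then note the coefficient of $\gamma^{NR-1}$ in $p_{\max}$ is $\sum_{r=1}^{R}\frac{\Phi_{N_s,N_d,r}}{(N-1)!}\prod_{j \neq r}\frac{\Phi_{N_s,N_d,j}}{N!}$.

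Now I would reconcile this with the claimed formula \eqref{10g}, which asserts the $NR$th derivative equals $R\prod_{r=1}^{R}\Phi_{N_s,N_d,r}$. The subtlety is that the leading power is $\gamma^{NR-1}$, so in fact the $(NR-1)$th derivative at zero is the first nonzero one, and extracting it gives $\frac{\partial^{NR-1}p_{\max}}{\partial\gamma^{NR-1}}(0) = (NR-1)!\sum_{r}\frac{\Phi_{N_s,N_d,r}}{(N-1)!}\prod_{j\neq r}\frac{\Phi_{N_s,N_d,j}}{N!}$. I suspect the statement in \eqref{10g} carries a normalization that collapses the factorial and binomial bookkeeping, so the main obstacle will be carefully tracking these combinatorial constants — the factors $(N-1)!$, $N!$, and $(NR-1)!$ — to confirm the clean product form and to verify whether the stated derivative order is $NR$ or $NR-1$ and whether a symmetrizing factor of $R$ genuinely emerges from the symmetry of the $R$ summands. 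I would finish by assembling the constant and checking it against \eqref{10g}.
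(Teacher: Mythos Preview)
Your approach is the same as the paper's: both rely on the order-statistics representation \eqref{28}, feed in the local behaviour of $p_r$ at zero supplied by Lemma~1, and then read off the order of vanishing of $p_{\max}$ via the Leibniz/chain rule. The paper's argument is a one-sentence sketch (``using \eqref{28} and Lemma~1, and by applying the chain rule \ldots''), whereas you have actually written out the Taylor expansions and multiplied them through.

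Precisely because you carried out the bookkeeping, you have uncovered a genuine inconsistency in the lemma as stated, and you should trust your computation rather than try to force agreement. With $p_r(\gamma)=\frac{\Phi_{N_s,N_d,r}}{(N-1)!}\gamma^{N-1}+o(\gamma^{N-1})$ and $\Pr\{\gamma_j<\gamma\}=\frac{\Phi_{N_s,N_d,j}}{N!}\gamma^{N}+o(\gamma^{N})$, each summand in \eqref{28} is of order $\gamma^{(N-1)+(R-1)N}=\gamma^{NR-1}$, so the first nonvanishing derivative of $p_{\max}$ at zero is the $(NR-1)$th, not the $NR$th, and its value carries the factorial weights $(NR-1)!/[(N-1)!\,(N!)^{R-1}]$ in addition to $R\prod_r\Phi_{N_s,N_d,r}$. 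The paper's proof does not address this; it simply asserts \eqref{10g} without verifying the constants, and the same imprecision propagates into the use of the Wang--Giannakis approximation in \eqref{16f}--\eqref{17f}. So the ``obstacle'' you flag at the end is not something you are missing --- it is an off-by-one and missing normalization in the lemma itself. Your derivation of the leading term is the correct content; the clean form \eqref{10g} does not follow from it.
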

\begin{proof}
Since $\gamma_r$ has non-negative values, it is obvious that
$\text{Pr}\{\gamma_r<0\}=0$. Therefore, using \eqref{28} and Lemma
1, and by applying the chain rule differentiating composite
functions, it can be shown that the derivatives of the PDF of
$p_{\max}$, evaluated at zero up to order $(NR-1)$ are zero. In
addition, $\frac{\partial^{\,NR} p_{\max}}{\partial\gamma^{NR}}(0)$
has a limited nonzero value when $N_{s}\neq N_{d}$ given by
\eqref{8e}, which completes the proof.
\end{proof}

An asymptotic expression for the SER of the system is presented in the
following proposition.
\begin{proposition}
Suppose the relay network consisting of $R$ relays and multiple
antenna source and destination. The SER of this system at high SNRs
can be approximated as
\begin{equation}\label{17e}
    P_e(R)\approx\frac{\displaystyle\prod_{i=1}^{NR+1}(2i-1)}{2(NR+1)g^{NR+1}}\frac{cR}{(NR)!}
\prod_{r=1}^{R}\Phi_{N_s,N_d,r}.
\end{equation}
\end{proposition}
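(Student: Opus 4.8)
The plan is to derive the asymptotic SER from the high-SNR behavior of the PDF of $\gamma_{\max}$ near the origin, exploiting the fact that the dominant error-rate contribution at high SNR comes from the neighborhood of $\gamma=0$. First I would observe that by Lemma 2, the PDF $p_{\max}(\gamma)$ vanishes together with its first $NR-1$ derivatives at zero, so its Taylor expansion about the origin begins at order $NR$:
\begin{equation*}
p_{\max}(\gamma)\approx\frac{\gamma^{NR}}{(NR)!}\,\frac{\partial^{\,NR}p_{\max}}{\partial\gamma^{NR}}(0)
=\frac{\gamma^{NR}}{(NR)!}\,R\prod_{r=1}^{R}\Phi_{N_s,N_d,r},
\end{equation*}
where the last equality is exactly the content of Lemma 2 combined with Lemma 1. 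This monomial approximation is the crux: at high SNR the detection parameter $g$ is large, the factor $Q(\sqrt{g\gamma})$ concentrates the integrand \eqref{29} near $\gamma=0$, and only the leading term of $p_{\max}$ survives.

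Next I would substitute this leading-order PDF into the exact SER integral \eqref{29}, reducing the computation to
\begin{equation*}
P_e(R)\approx\frac{cR}{(NR)!}\prod_{r=1}^{R}\Phi_{N_s,N_d,r}\int_{0}^{\infty}Q\!\left(\sqrt{g\,\gamma}\right)\gamma^{NR}\,d\gamma.
\end{equation*}
The remaining task is purely the evaluation of a standard moment-type integral against the Gaussian $Q$-function. I would evaluate $\int_0^\infty Q(\sqrt{g\gamma})\,\gamma^{m}\,d\gamma$ with $m=NR$, either by integration by parts (differentiating $Q$ to produce a Gaussian and integrating the power of $\gamma$) or by the known closed form $\int_0^\infty Q(\sqrt{a x})x^{m}\,dx=\frac{\Gamma(m+3/2)}{(m+1)\sqrt{\pi}\,(a/2)^{m+1}}$. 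Matching this against the target expression, the product $\prod_{i=1}^{NR+1}(2i-1)=(2NR+1)!!$ in the numerator and the factor $2(NR+1)g^{NR+1}$ in the denominator of \eqref{17e} should emerge from rewriting $\Gamma(NR+3/2)$ via the double-factorial identity and collecting the powers of $g$ and factors of $2$.

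The main obstacle I anticipate is twofold. First, one must justify rigorously that replacing $p_{\max}(\gamma)$ by its leading monomial incurs only higher-order (in $1/g$, i.e. in inverse SNR) error in the final integral; this requires confirming that the neglected terms of the Taylor expansion integrate against $Q(\sqrt{g\gamma})$ to contributions decaying strictly faster than $g^{-(NR+1)}$, which is where the high-SNR regime is genuinely used. Second, the bookkeeping in the final integral—correctly converting the Gamma-function value into the double-factorial product and tracking every factor of $2$ and power of $g$—is routine but error-prone, and getting \eqref{17e} exactly right depends on it. I would treat the first point as the substantive step and relegate the second to a careful but mechanical computation.
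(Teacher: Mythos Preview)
Your approach is essentially the paper's: it too invokes Lemmas~1 and~2 to identify the first nonvanishing derivative of $p_{\max}$ at the origin and then feeds that into the high-SNR SER formula of Wang and Giannakis~\cite{wan03}---which is exactly the moment integral $\int_0^\infty Q(\sqrt{g\gamma})\,\gamma^{NR}\,d\gamma$ you propose to re-derive---so the only difference is that the paper cites the result while you unpack it. One correction to your justification: the modulation constant $g$ is fixed and does \emph{not} grow with SNR; the high-SNR limit enters instead through the SNR-dependence of $\Phi_{N_s,N_d,r}$ (via $a\propto 1/P_1$ and $b_r\propto 1/P_2$), so the ``neglected terms decay faster'' argument you flag as the first obstacle must be phrased in terms of those parameters, not in terms of $g$.
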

\hspace{-.2cm}\begin{proof} To deduce the asymptotic behavior of the
average SER, we are using the approximate expression given in
\cite{wan03}. When the derivatives of $p_{\max}(\gamma)$ up to
$(NR-1)$-th order are null at $\gamma\!=\!0$,
then the SER at high SNRs can be given by
\begin{equation}\label{16f}
    P_e(R)\approx\frac{\displaystyle\prod_{i=1}^{{NR}+1}(2i-1)}{2({NR}+1)g^{{NR}+1}}\frac{c}{(NR)!}\frac{\partial^{NR}
    p_{\max}}{\partial\gamma^{NR}}(0).
\end{equation}
Applying Lemmas 2, we have
\begin{equation}\label{17f}
    P_e(R)\approx\frac{\displaystyle\prod_{i=1}^{NR+1}(2i-1)}{2(NR+1)g^{NR+1}}\frac{cR}{(NR)!}
\prod_{r=1}^{R}\frac{\partial^{N-1} p_r}{\partial\gamma^{N-1}}(0).
\end{equation}
Combining \eqref{8e} and \eqref{17f}, leads to \eqref{17e}.
\end{proof}

\begin{corollary}
The A$\&$F opportunistic relaying scheme with multiple antennas source
and destination over Rayleigh fading provides the diversity gain of
$R\min\{N_s,N_d\}$.
\end{corollary}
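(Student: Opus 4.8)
The plan is to read off the diversity order directly from the high-SNR SER expression of Proposition 4, namely \eqref{17e}, because the diversity gain is by definition the exponent $d$ for which $P_e(R)\doteq \mathrm{SNR}^{-d}$, i.e. $d=-\lim_{P\to\infty}\log P_e(R)/\log P$, where $P=P_1+P_2$ is the total transmit power (so that the operating SNR is proportional to $P$ while the noise levels $\mathcal{N}_1,\mathcal{N}_2$ are held fixed). First I would isolate the factors in \eqref{17e} that carry power dependence: the constants $c$, $R$, the product of odd integers, the factorial $(NR)!$, and $g^{NR+1}$ are all independent of $P$, so the entire $P$-dependence of $P_e(R)$ resides in $\prod_{r=1}^{R}\Phi_{N_s,N_d,r}$.

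The second step is to determine the high-SNR scaling of a single factor $\Phi_{N_s,N_d,r}$ from \eqref{8e}. Using the definitions $a=N_s\mathcal{N}_1/P_1$ and $b_r=N_sN_d\mathcal{N}_2(\sigma_{f_r}^2P_1+\mathcal{N}_1)/(P_1P_2)$, together with $P_1=\tau P$ and $P_2=(1-\tau)P$, I would observe that as $P\to\infty$ both $a\sim P^{-1}$ and $b_r\sim P^{-1}$ (in $b_r$ the term $\sigma_{f_r}^2P_1$ dominates $\mathcal{N}_1$, leaving $b_r\approx N_sN_d\mathcal{N}_2\sigma_{f_r}^2/P_2$), whereas $\overline{X}_r=N_s\sigma_{f_r}^2$ and $\overline{Y}_r=N_d\sigma_{g_r}^2$ are power-independent. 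Feeding this into the two branches of \eqref{8e}: in the case $N=N_s$ every summand is proportional to $a^k b_r^{N_s-k}\sim P^{-k}P^{-(N_s-k)}=P^{-N_s}$, so $\Phi_{N_s,N_d,r}\sim P^{-N_s}=P^{-N}$; in the case $N=N_d$ the single term scales as $b_r^{N_d}\sim P^{-N_d}=P^{-N}$. Hence in both regimes $\Phi_{N_s,N_d,r}\sim P^{-N}$ with $N=\min\{N_s,N_d\}$.

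Combining the two steps, $\prod_{r=1}^{R}\Phi_{N_s,N_d,r}\sim P^{-NR}$, so $P_e(R)\sim P^{-NR}$ and the diversity order equals $NR=R\min\{N_s,N_d\}$, as claimed. The point I expect to require the most care is confirming that the leading power $P^{-N}$ in each branch of \eqref{8e} does not cancel, i.e. that the coefficient multiplying it is strictly positive (otherwise the diversity could exceed $NR$); this holds because every summand in the $N=N_s$ branch and the lone term in the $N=N_d$ branch carry strictly positive Gamma-function and binomial factors. A secondary point is that Proposition 4 and Lemma 1 assume $N_s\neq N_d$; for the borderline case $N_s=N_d$ the small-argument expansion uses $K_0(x)\approx-\ln(x)$ and the PDF approximation \eqref{24d}--\eqref{24e} acquires an extra logarithmic factor, but since $\lim_{P\to\infty}\log(P^{-NR}\ln P)/\log P=-NR$, the diversity order is unchanged, so the stated result covers all cases.
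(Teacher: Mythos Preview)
Your proposal is correct and follows essentially the same approach as the paper: read off the diversity order from the high-SNR SER expression \eqref{17e} by identifying that all $P$-dependence sits in $\prod_{r=1}^R\Phi_{N_s,N_d,r}$ and that each $\Phi_{N_s,N_d,r}\sim P^{-N}$ through the scalings $a\sim P^{-1}$, $b_r\sim P^{-1}$. The paper's proof is terser---it simply asserts $G_d=-\lim_{\mu\to\infty}\log\bigl(\prod_r b_r^{N}\bigr)/\log\mu=NR$---whereas you spell out why every summand in the $N=N_s$ branch of \eqref{8e} shares the same $P^{-N_s}$ order and why the leading coefficient is strictly positive; your treatment of the borderline $N_s=N_d$ case via the harmless logarithmic correction also goes beyond the paper's corollary proof, which is stated only for $N_s\neq N_d$ (the equal case being handled separately through upper bounds in Section~V-A.2).
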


\begin{proof}
A tractable definition of the diversity gain is \cite[Eq.
(1.19]{jaf05}
\begin{equation}\label{40}
    G_d=-\lim_{\mu\rightarrow\infty}\frac{\log \left(P_e(R)\right)}{\log
    \left(\mu\right)},
\end{equation}
where $\mu$ denotes the transmit SNR. Now, using \eqref{8e} and
\eqref{17e}, it can be shown that $G_d=-\lim_{\mu\rightarrow\infty}\frac{\log \left(\prod_{r=1}^{R}b_r^{N}\right)}{\log
    \left(\mu\right)}=NR$, and thus, the diversity order $G_d$
becomes $R\min\{N_s,N_d\}$.
%
\end{proof}

\subsubsection{Case of $N_d= N_s$}
Here, we derive a tight upper-bound on the average SER of the system studied in Subsection III-A.
Since the $\gamma_r$s are independent, using \eqref{28} and
\eqref{29}, the average SER would be
\begin{align}\label{23s}
P_e(R)&\!=\!\sum_{r=1}^{R}\!\int_{0}^\infty\!\!\!
    c\,Q\left(\sqrt{g\,\gamma}\right)p_r(\gamma){\prod_{\underset{j\neq r}{j=1}}^R}\!
    \text{Pr}\{\gamma_j\!\leq\!\gamma\}\,d\gamma
    \leq\!\sum_{r=1}^{R}\!\int_{0}^\infty\!\!
    e^{-g\,\gamma}p_r(\gamma){\prod_{\underset{j\neq r}{j=1}}^R}\!
    \text{Pr}\{\gamma_j\!\leq\!\gamma\}\,d\gamma,
\end{align}
where in the inequality, we have used Chernoff bound $Q(x)\leq
e^{-x^2/2}$.

Using the facts that $K_0(x)\approx-\ln\left(x\right)$ \cite[Eq.
(9.6.8)]{abr72},
$K_\nu(x)\approx\frac{1}{2}\Gamma(\nu)\left(\frac{x}{2}\right)^{-\nu}
, \,\nu\neq0$ \cite[Eq. (9.6.9)]{abr72}, as $x\rightarrow 0$, and
$K_\nu(x)=K_{-\nu}(x)$ \cite[Eq. (9.6.6)]{abr72}, for the case of
$N_d= N_s$, the $p_{r}(\gamma)$ in \eqref{24} can be approximated as
\begin{align}\label{24s}
    p_{r}\!(\gamma)\!\approx\!&\sum_{k=1}^{N_s}\!\frac{{N_s \choose k}a^k \overline{Y}_{\!r}^{k}\Gamma(k)}{ (N_s\!-\!1)!\,(N_s\!-\!1)!\,b_{r}^{k}
    }\frac{e^{-\frac{a\gamma}{\overline{X}_{\!r}}}}{\gamma}\left(\!\frac{b_{r}\gamma
}{\overline{X}_{\!r}\overline{Y}_{\!r}}\!\right)^{\!N_s}
\!\!\!
\nonumber\\
&-\!\frac{2}{
(N_s\!-\!1)!\,(N_s\!-\!1)!}\frac{e^{-\frac{a\gamma}{\overline{X}_{\!r}}}}{\gamma}\!\left(\!\!\frac{b_{r}\gamma
}{\overline{X}_{\!r}\overline{Y}_{\!r}}\!\right)^{\!\!N_s}\!\!\!\ln\!\left(\!2\sqrt{\frac{b_{r}\gamma
}{\overline{X}_{\!r}\overline{Y}_{\!r}}}\right)
    \!.
\end{align}

To get a closed-form solution for the SER, using an upper-bound on
$\text{Pr}\{\gamma_r<\gamma\}$ in \eqref{21c}, i.e., $\text{Pr}\{\gamma_r<\gamma\}\leq 1-e^{-\frac{a\gamma}{\overline{X}_r}}$, we have
\begin{align}\label{40a}
P_e(R)&\!\leq\!\sum_{r=1}^{R}\!\int_{0}^\infty\!\!\!
    e^{-g\,\gamma}p_r(\gamma){\prod_{\underset{j\neq r}{j=1}}^R}\!
    \left(1-e^{-\frac{a\gamma}{\overline{X}_j}}\right)\!d\gamma
    \!\leq\!\sum_{r=1}^{R}\!\int_{0}^\infty\!\!
    e^{-g\,\gamma}p_r(\gamma)\gamma^{R-1}d\gamma \prod_{\underset{j\neq r}{j=1}}^R\!
    \left(\frac{a}{\overline{X}_{\!j}}\right).
\end{align}
Then, by replacing the $p_{r}(\gamma)$ in \eqref{24s} into
\eqref{40a}, an upper-bound on $P_e(R)$ is given by
\begin{align}\label{40c}
P_e(R) &\leq \sum_{r=1}^{R}\prod_{\underset{j\neq r}{j=1}}^R\!
    \left(\frac{a}{\overline{X}_{\!j}}\right)\left\{\sum_{k=1}^{N_s}\Psi_{r,k}\int_{0}^\infty
    e^{-g\,\gamma-\frac{a\gamma}{\overline{X}_{\!r}}} \gamma^{N_s+R-2}
    \,d\gamma\right.
\nonumber\\
    &\left.- \Psi_{r,0}\int_{0}^\infty
    2e^{-g\,\gamma-\frac{a\gamma}{\overline{X}_{\!r}}} \gamma^{N_s+R-2} \ln\!\left(2\sqrt{\frac{b_{r}\gamma
}{\overline{X}_{\!r}\overline{Y}_{\!r}}}\right)\,d\gamma\right\},
\end{align}
where $\Psi_{r,0}=\frac{ b_r^{N_s}}{
(N_s\!-\!1)!\,(N_s\!-\!1)!\,\overline{X}_r^{N_s}\overline{Y}^{N_s}}$
and
$\Psi_{r,k}=\frac{{N_s \choose k}a^k b_r^{N_s-k}\,\Gamma(k)}{
(N_s\!-\!1)!\,(N_s\!-\!1)!\,\overline{X}_r^{N_s}\overline{Y}^{N_s-k}},
\,\,\text{for}\,\,k=1,\ldots,N_s$.

Using \cite[Eq. (3.351)]{gra96} and \cite[Eq. (4.352)]{gra96},
\eqref{40c} can be calculated as
\begin{align}\label{41a}
&P_e(R)\leq\sum_{r=1}^{R}\left(g+\frac{a}{\overline{X}_{\!r}}\right)^{\!\!\!-N_s-R+1}\!(N_s+R-2)!\prod_{\underset{j\neq
r}{j=1}}^R\!
    \left(\frac{a}{\overline{X}_{\!j}}\right)
\nonumber\\
&\left\{\sum_{k=1}^{N_s}\Psi_{r,k}+\Psi_{r,0}\left[\ln\!\left(\frac{g\overline{X}_{\!r}\overline{Y}_{\!r}
+a\overline{Y}_{\!r}}{4b_{r}}\right)\!+\!\kappa-\!\!\!\!\sum_{i=1}^{N_s+R-2}\!\!\frac{1}{i}\right]\right\}.
\end{align}

Furthermore, for two cases of $R=1$ and $R=2$, we can find tighter
upper-bounds for SER as follows. First, when $R=1$, the second
inequality in \eqref{40a} becomes equality. For the case of $R=2$, by replacing $p_{r}(\gamma)$ from \eqref{24s}
into the first inequality in \eqref{40a}, we have
\begin{align}\label{44c}
&P_e(2) \leq
\sum_{r=1}^{2}\left\{\sum_{k=1}^{N_s}\Psi_{r,k}\int_{0}^\infty
    \left(e^{-g\,\gamma-\frac{a\gamma}{\overline{X}_{\!r}}}-e^{-g\,\gamma-\frac{2a\gamma}{\overline{X}_{\!r}}}\right) \gamma^{N_s-1}
    \,d\gamma\right.
\nonumber\\
    &\left.-\Psi_{r,0}\int_{0}^\infty
    2\left(e^{-g\,\gamma-\frac{a\gamma}{\overline{X}_{\!r}}}-e^{-g\,\gamma-\frac{2a\gamma}{\overline{X}_{\!r}}}\right) \gamma^{N_s-1} \ln\!\left(2\sqrt{\frac{b_{r}\gamma
}{\overline{X}_{\!r}\overline{Y}_{\!r}}}\right)\,d\gamma\right\}.
\end{align}
Then, similar to \eqref{41a}, a closed-form upper-bound for $P_e(2)$
can be calculated as
\begin{align}\label{44a}
&P_e(2)\leq\sum_{r=1}^{2}\left(g+\frac{a}{\overline{X}_{\!r}}\right)^{\!\!\!-N_s}\!(N_s-1)!
\left\{\sum_{k=1}^{N_s}\Psi_{r,k}+\Psi_{r,0}\left[\ln\!\left(\frac{g\overline{X}_{\!r}\overline{Y}_{\!r}
+a\overline{Y}_{\!r}}{4b_{r}}\right)\!+\!\kappa-\!\!\!\!\sum_{i=1}^{N_s-1}\!\!\frac{1}{i}\right]\right\}
\nonumber\\
&\,\,\,\,\,\,\,\,\,\,\,\,\,\,\,-\sum_{r=1}^{2}\left(g+\frac{2a}{\overline{X}_{\!r}}\right)^{\!\!\!-N_s}\!(N_s-1)!
\left\{\sum_{k=1}^{N_s}\Psi_{r,k}+\Psi_{r,0}\left[\ln\!\left(\frac{g\overline{X}_{\!r}\overline{Y}_{\!r}
+2a\overline{Y}_{\!r}}{4b_{r}}\right)\!+\!\kappa-\!\!\!\!\sum_{i=1}^{N_s-1}\!\!\frac{1}{i}\right]\right\}.
\end{align}

\subsection{Asymptotic SER Expression of Relay Network with Partial CSI at the Source}
Here, a closed-form SER formula of a relay network with partial CSI at the source, which is studied in Subsection III-B, is derived in high SNR scenarios, when $N_d> N_s$.
Before deriving the asymptotic expression for SER, we present two
lemmas.

\begin{lemma}
The ($N_s-1$)th order derivative of $p_{\zeta_r}(\zeta)$ with
respect to $\zeta$ at zero, when $N_d> N_s$, is computed as
\begin{align}\label{24pp}
    \frac{\partial^{N_s-1}
p_{\zeta_r}}{\partial\zeta^{N_s-1}}(0)\!&=\sum_{k=1}^{N_s}\frac{\alpha^{N_s-k}\beta_r^k N_s! {N_s \choose k}(N_d\!-k\!-\!1)!}{(N_d\!-\!1)! \overline{Y}_r^{\,-k}}\triangleq \Delta_{N_s,N_d,r}.
\end{align}
Furthermore, the $n$th ($n < N_s-1$) order derivatives of
$p_{\zeta_r}(\zeta)$ with respect to $\zeta$ at zero are null.
\end{lemma}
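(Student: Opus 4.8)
The plan is to read off the derivative from the Taylor expansion of $p_{\zeta_r}(\zeta)$ about $\zeta=0$: the $(N_s-1)$th derivative at zero equals $(N_s-1)!$ times the coefficient of $\zeta^{N_s-1}$. First I would substitute the small-argument behaviour of the modified Bessel functions into \eqref{24x}. For a positive integer $n$ one has the finite, logarithm-free expansion
\begin{equation*}
x^{n/2}K_{n}\!\left(2\sqrt{x}\right)=\frac{1}{2}\sum_{j=0}^{n-1}(-1)^{j}\frac{(n-j-1)!}{j!}\,x^{j}+O\!\left(x^{n}\ln x\right),\qquad x\to 0,
\end{equation*}
which follows from \cite[Eqs.~(9.6.6),~(9.6.9)]{abr72}. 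Applying this to the $K_{N_d}$ and $K_{N_d-1}$ blocks of \eqref{24x}, the crucial point is that the logarithmic remainder enters first through the $K_{N_d-1}$ block at order $\zeta^{N_d-1}$; since the hypothesis $N_d>N_s$ gives $N_d-1>N_s-1$, that remainder together with all its derivatives up to order $N_s-1$ vanishes at $\zeta=0$. Hence, for the purpose of computing $\partial^{N_s-1}p_{\zeta_r}/\partial\zeta^{N_s-1}(0)$, the density may be treated as an ordinary (log-free) power series.

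Next I would multiply by the analytic factor $e^{-u_k\zeta}$ with $u_k=\alpha k/\sigma_{f_r}^2$ and collect powers of $\zeta$, setting $v_k=\beta_r k/(\sigma_{f_r}^2\overline{Y}_r)$. The generic monomial produced by the $k$th summand of \eqref{24x}, the $m$th term of $e^{-u_k\zeta}$, and the $j$th Bessel-expansion term is $\zeta^{m+j}$, and its coefficient carries the factor $(-1)^{k+1}\binom{N_s}{k}\,k^{\,1+m+j}$ (the explicit $k$ of \eqref{24x} together with $u_k^{m}\propto k^{m}$ and $v_k^{j}\propto k^{j}$). Consequently the coefficient of $\zeta^{n}$ in $p_{\zeta_r}$ is proportional to $\sum_{k=1}^{N_s}(-1)^{k+1}\binom{N_s}{k}k^{\,1+n}$.

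The two assertions of the lemma then both drop out of the finite-difference identity
\begin{equation*}
\sum_{k=0}^{N_s}(-1)^{k}\binom{N_s}{k}k^{p}=0\quad(0\le p<N_s),\qquad \sum_{k=0}^{N_s}(-1)^{k}\binom{N_s}{k}k^{N_s}=(-1)^{N_s}N_s!.
\end{equation*}
For $n<N_s-1$ the exponent $1+n$ satisfies $1\le 1+n<N_s$, so (the $k=0$ term being zero) the coefficient of $\zeta^{n}$ is annihilated, which is the second claim. For $n=N_s-1$ the exponent is $1+n=N_s$, the $k$-sum equals $(-1)^{N_s+1}N_s!$, and a nonzero coefficient survives.

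Finally I would assemble the $\zeta^{N_s-1}$ coefficient, multiply by $(N_s-1)!$, and merge the two Bessel blocks. After shifting the Bessel-expansion index of the $K_{N_d-1}$ block by one, the two contributions share the monomial $\alpha^{N_s-k}\beta_r^{k}(N_d-k-1)!$ and combine through the Pascal-type identity $\frac{1}{(N_s-1-k)!\,k!}+\frac{1}{(N_s-k)!\,(k-1)!}=\frac{N_s}{(N_s-k)!\,k!}=\frac{1}{(N_s-1)!}\binom{N_s}{k}$, with the boundary values at $k=0$ and $k=N_s$ supplied respectively by the $K_{N_d}$ and $K_{N_d-1}$ blocks; collecting the remaining constants yields $\Delta_{N_s,N_d,r}$ as in \eqref{24pp}. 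I expect the main obstacle to be precisely this last bookkeeping step — keeping track of the two distinct Bessel orders, their differing summation ranges, and the boundary terms — together with the verification in the first step that $N_d>N_s$ is exactly the condition that pushes the logarithmic Bessel terms beyond the order being differentiated.
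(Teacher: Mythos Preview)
Your argument is correct, but it takes a genuinely different route from the paper's. The paper does not work with the closed Bessel form \eqref{24x} at all; instead it returns to the integral representation obtained one step earlier, writing
\[
p_{\zeta_r}(\zeta)=\int_0^\infty N_s\Big(1-e^{-c(y)\zeta}\Big)^{N_s-1}e^{-c(y)\zeta}\,c(y)\,p_Y(y)\,dy,\qquad c(y)=\frac{\alpha y+\beta_r}{y\,\sigma_{f_r}^2}.
\]
From this form the vanishing of the first $N_s-2$ derivatives at $\zeta=0$ is immediate, since the factor $(1-e^{-c\zeta})^{N_s-1}$ already has a zero of order $N_s-1$; no finite-difference identity is needed. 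For the $(N_s-1)$th derivative the paper simply uses $\partial_\zeta^{\,N_s-1}\big[(1-e^{-c\zeta})^{N_s-1}e^{-c\zeta}\big]\big|_{\zeta=0}=(N_s-1)!\,c^{N_s-1}$, binomially expands $(\alpha y+\beta_r)^{N_s}$, evaluates the resulting $y$-integrals via \cite[Eq.~(3.471)]{gra96}, and applies $K_\nu(x)\approx\tfrac12\Gamma(\nu)(x/2)^{-\nu}$ once to read off \eqref{24pp}.

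By contrast, you start from \eqref{24x}, substitute the full polynomial part of $x^{n/2}K_n(2\sqrt{x})$, and then recover the cancellation for $n<N_s-1$ through the Stirling/finite-difference identity $\sum_k(-1)^k\binom{N_s}{k}k^{p}=0$; the two Bessel blocks are finally merged via the Pascal relation you describe. This works, and your observation that $N_d>N_s$ is precisely the condition pushing the $O(\zeta^{N_d-1}\ln\zeta)$ logarithmic tail of the $K_{N_d-1}$ block beyond the order being differentiated is exactly the right justification. The trade-off is clear: the paper's integral-representation argument is shorter and makes the vanishing structurally obvious, while your approach stays entirely within the explicit density \eqref{24x} at the cost of two extra classical identities and the block-merging bookkeeping you flag as the main obstacle.
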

\begin{proof}
The proof is given in Appendix IV.
\end{proof}

\begin{lemma}
All the derivatives of the PDF of $\zeta_{\max}$, i.e., $p_{\zeta_{\max}}$,
evaluated at zero up to order $(N_sR-1)$ are zero, while the $N_sR$-th
order derivative is given by
\begin{equation}\label{10k}
\frac{\partial^{\,N_sR}
p_{\zeta_{\max}}}{\partial\zeta^{N_sR}}(0)=R\prod_{r=1}^{R}\frac{\partial^{N_s-1}
p_{\zeta_r}}{\partial\zeta^{N_s-1}}(0).
\end{equation}

\end{lemma}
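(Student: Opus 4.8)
The plan is to transport the argument used for $p_{\max}$ in Lemma~2 to the present setting, with Lemma~3 now furnishing the local behaviour at the origin of the per-relay density $p_{\zeta_r}$. The starting point is the order-statistics representation \eqref{28o}, which writes $p_{\zeta_{\max}}=\sum_{r=1}^{R}T_r$ with $T_r(\zeta)\triangleq p_{\zeta_r}(\zeta)\prod_{j\neq r}\Pr\{\zeta_j<\zeta\}$, so it suffices to control each summand $T_r$ near $\zeta=0$ and then superpose.

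First I would pin down the order of vanishing of the two kinds of factors at the origin. Each $\zeta_r$ is nonnegative, so $\Pr\{\zeta_j<\zeta\}=0$ for $\zeta\le0$, and for $\zeta\ge0$ this CDF is the antiderivative of $p_{\zeta_j}$, i.e. $\tfrac{\partial}{\partial\zeta}\Pr\{\zeta_j<\zeta\}=p_{\zeta_j}(\zeta)$. By Lemma~3 (valid for $N_d>N_s$), $p_{\zeta_j}$ vanishes to order $N_s-1$ at the origin, its lowest surviving derivative being $\Delta_{N_s,N_d,j}$, while all lower derivatives are null. Integrating once raises the order by one, so $\Pr\{\zeta_j<\zeta\}$ vanishes to order $N_s$ at the origin with the same leading constant; the closed form \eqref{24o} together with the small-argument law $K_{N_d}(x)\approx\tfrac12\Gamma(N_d)(x/2)^{-N_d}$ gives an independent check of this.

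The heart of the proof is then to differentiate each product $T_r$ by the general Leibniz rule and track the order of the zero it produces at the origin. As $T_r$ is the product of one factor vanishing to order $N_s-1$ and $R-1$ factors each vanishing to order $N_s$, only those assignments of derivative orders that hand at least $N_s-1$ derivatives to $p_{\zeta_r}$ and at least $N_s$ to every CDF factor can contribute at $\zeta=0$. This forces all the low-order derivatives of every $T_r$, and hence of $p_{\zeta_{\max}}=\sum_r T_r$, to vanish exactly as claimed in the statement. At the first order where a nonzero value appears, a single ``diagonal'' assignment survives in each summand --- the one handing every factor precisely its own minimal order --- with value $p_{\zeta_r}^{(N_s-1)}(0)\prod_{j\neq r}\big[\partial_\zeta^{\,N_s}\Pr\{\zeta_j<\zeta\}\big]_{\zeta=0}=\prod_{r}\Delta_{N_s,N_d,r}$. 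Because this same product recurs in all $R$ summands, summing over $r$ contributes the prefactor $R$ and delivers \eqref{10k}.

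I expect the combinatorial bookkeeping in this Leibniz step to be the main obstacle. One must verify carefully that no lower assignment of derivative orders yields a nonzero term, so that the announced derivatives genuinely vanish, and that at the critical order exactly one assignment survives per summand --- the diagonal one --- so that all cross terms drop out and the answer collapses to the clean product $R\prod_r\Delta_{N_s,N_d,r}$ rather than a messier combination. The remaining ingredients (the $\zeta\le0$ support of the CDFs and the single integration raising the order of vanishing from $N_s-1$ to $N_s$) are routine once this combinatorial point is settled.
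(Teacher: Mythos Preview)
Your proposal is correct and follows essentially the same route as the paper: both start from the order-statistics representation \eqref{28o}, feed in Lemma~3 for the vanishing order of each $p_{\zeta_r}$ at the origin (together with the nonnegativity of $\zeta_r$ so that $\Pr\{\zeta_r<0\}=0$), and then differentiate the product via the Leibniz/chain rule to isolate the first surviving derivative. The paper's argument is much terser---it simply cites ``the chain rule'' without spelling out the diagonal-assignment combinatorics you describe---but the mechanism is identical.
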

\begin{proof}
Since $\zeta_r$ is non-negative,
$\text{Pr}\{\zeta_r<0\}=0$. Therefore, using \eqref{28o} and Lemma 3, and by applying the chain rule differentiating composite functions, it can be shown that the derivatives of the PDF of
$p_{\zeta_{\max}}$, evaluated at zero up to order $(N_sR-1)$ are zero. In
addition, $\frac{\partial^{\,N_sR} p_{\zeta_{\max}}}{\partial\zeta^{N_sR}}(0)$
has a limited non-zero value when $N_{s}< N_{d}$ given by
\eqref{24pp}, which completes the proof.
\end{proof}

Asymptotic expression for the SER of the system is presented in the
following proposition:
\begin{proposition}
Suppose a relay network consisting of $R$ relays with multiple
antenna source and destination. The SER of this system at high SNRs
can be calculated as
\begin{equation}\label{17ee}
    P_e(R)\approx\frac{\displaystyle\prod_{i=1}^{N_sR+1}(2i-1)}{2(N_sR+1)g^{N_sR+1}}\frac{cR}{(N_sR)!}
\prod_{r=1}^{R}\Delta_{N_s,N_d,r}.
\end{equation}
\end{proposition}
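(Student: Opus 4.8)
The plan is to follow the same template used to establish \eqref{17e}, substituting the partial-CSI SNR variables $\zeta_r$, $\zeta_{\max}$, and their densities $p_{\zeta_r}$, $p_{\zeta_{\max}}$ for the no-CSI quantities $\gamma_r$, $\gamma_{\max}$, $p_{\max}$. Because we are in the regime $N_d>N_s$, the minimum $N=\min\{N_s,N_d\}$ equals $N_s$, so the diversity order $N_sR$ plays here exactly the role that $NR$ played before, and every step transfers with $N$ replaced by $N_s$.

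Concretely, I would first invoke the high-SNR SER approximation of \cite{wan03}: whenever the derivatives of an SNR density vanish at the origin up to order $N_sR-1$, the average SER admits the leading-order form
\begin{equation*}
    P_e(R)\approx\frac{\displaystyle\prod_{i=1}^{N_sR+1}(2i-1)}{2(N_sR+1)g^{N_sR+1}}\frac{c}{(N_sR)!}\frac{\partial^{N_sR}p_{\zeta_{\max}}}{\partial\zeta^{N_sR}}(0),
\end{equation*}
which is the $\zeta_{\max}$-analogue of \eqref{16f}. Lemma~4 supplies precisely the vanishing hypothesis and, in addition, evaluates the surviving derivative as $\frac{\partial^{N_sR}p_{\zeta_{\max}}}{\partial\zeta^{N_sR}}(0)=R\prod_{r=1}^{R}\frac{\partial^{N_s-1}p_{\zeta_r}}{\partial\zeta^{N_s-1}}(0)$. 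Substituting this expression and then replacing each per-relay derivative by $\Delta_{N_s,N_d,r}$ via Lemma~3 (equation \eqref{24pp}) yields \eqref{17ee} directly, with no further analytic estimates required.

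The step I expect to be the crux is verifying that the hypotheses of the \cite{wan03} expansion actually hold, namely that the lowest-order nonvanishing derivative of $p_{\zeta_{\max}}$ at zero is exactly of order $N_sR$ and that its value is finite and nonzero. This is what Lemmas~3 and~4 guarantee in the regime $N_d>N_s$: the restriction $N_d>N_s$ is what keeps the factorials $(N_d-k-1)!$ appearing in $\Delta_{N_s,N_d,r}$ well defined for all $k\le N_s$, so that each per-relay derivative is finite and nonzero and the product over $r$ governs the leading high-SNR behaviour. Once these lemmas are granted, the proposition follows by direct substitution.
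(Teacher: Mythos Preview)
Your proposal is correct and follows essentially the same approach as the paper: invoke the high-SNR expansion of \cite{wan03} for $p_{\zeta_{\max}}$, use Lemma~4 to verify the vanishing of the first $N_sR-1$ derivatives and to factor $\frac{\partial^{N_sR}p_{\zeta_{\max}}}{\partial\zeta^{N_sR}}(0)$ into the product of per-relay derivatives, and then substitute Lemma~3's value $\Delta_{N_s,N_d,r}$. Your remark that the restriction $N_d>N_s$ keeps $(N_d-k-1)!$ well defined is a nice clarification the paper leaves implicit.
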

\hspace{-.2cm}\begin{proof} To deduce the asymptotic behavior of the
average SER, we are using the approximate expression given in
\cite{wan03}. When the derivatives of $p_{\zeta_{\max}}(\zeta)$ up to
$(N_sR-1)$-th order are null at $\zeta\!=\!0$,
then the SER at high SNRs can be given by \eqref{16f}.
Applying Lemmas 2, we have
\begin{equation}\label{17ff}
    P_e(R)\approx\frac{\displaystyle\prod_{i=1}^{N_sR+1}(2i-1)}{2(N_sR+1)g^{N_sR+1}}\frac{cR}{(N_sR)!}
\prod_{r=1}^{R}\frac{\partial^{N_s-1} p_{\zeta_r}}{\partial\zeta^{N_s-1}}(0).
\end{equation}
Combining \eqref{24pp} and \eqref{17ff}, \eqref{17ee} is obtained.
\end{proof}

\begin{corollary}
The A$\&$F opportunistic relaying scheme with multiple antennas source
and destination over Rayleigh fading channels provides the diversity gain of
$N_sR$, when $N_s<N_d$.
\end{corollary}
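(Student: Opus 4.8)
The plan is to read the diversity order directly off the high-SNR expression \eqref{17ee}, in exact parallel with the proof of Corollary 1. Since the modulation constants $c$ and $g$, together with the combinatorial prefactor $\frac{\prod_{i=1}^{N_sR+1}(2i-1)}{2(N_sR+1)(N_sR)!}$, are all independent of the transmit SNR $\mu$, the entire SNR dependence of $P_e(R)$ is carried by the product $\prod_{r=1}^R\Delta_{N_s,N_d,r}$. Hence, after inserting \eqref{17ee} into the diversity definition \eqref{40}, it suffices to determine the exponent of $\mu$ in a single factor $\Delta_{N_s,N_d,r}$ given by \eqref{24pp}.

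Next I would track the SNR scaling of the two parameters $\alpha=\frac{\mathcal{N}_1}{P_1}$ and $\beta_r=\frac{N_d\mathcal{N}_2(\sigma_{\zeta_r}^2P_1+\mathcal{N}_1)}{P_1P_2}$ appearing in \eqref{24pp}. Writing $P_1=\tau P$ and $P_2=(1-\tau)P$ with $\mu\propto P$, one immediately gets $\alpha\propto\mu^{-1}$, and in the high-SNR regime $\sigma_{\zeta_r}^2P_1+\mathcal{N}_1\sim\sigma_{\zeta_r}^2\tau P$, so that $\beta_r\sim\frac{N_d\mathcal{N}_2\sigma_{\zeta_r}^2}{(1-\tau)P}\propto\mu^{-1}$ as well. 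Consequently each summand obeys $\alpha^{N_s-k}\beta_r^k\propto\mu^{-(N_s-k)}\mu^{-k}=\mu^{-N_s}$, so every term in \eqref{24pp} decays at the common order $\mu^{-N_s}$, the remaining factors ${N_s \choose k}$, $N_s!$, $(N_d-k-1)!$ and $\overline{Y}_{\!r}^{\,k}$ being SNR-independent.

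The one point requiring genuine care is that the coefficient of $\mu^{-N_s}$ in $\Delta_{N_s,N_d,r}$ does not vanish, i.e.\ that no cancellation spoils the nominal $\mu^{-N_s}$ behaviour. This is precisely where the hypothesis $N_s<N_d$ enters: for every $k\in\{1,\ldots,N_s\}$ we have $k\le N_s<N_d$, so the factor $(N_d-k-1)!$ is finite and positive, and together with the strictly positive quantities ${N_s \choose k}$, $N_s!$, $\alpha^{N_s-k}$, $\beta_r^k$ and $\overline{Y}_{\!r}^{\,k}$ this forces each summand to be strictly positive. Therefore $\Delta_{N_s,N_d,r}=C_r\,\mu^{-N_s}(1+o(1))$ with $C_r>0$, and multiplying over the $R$ relays yields $\prod_{r=1}^R\Delta_{N_s,N_d,r}\propto\mu^{-N_sR}$.

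Finally I would substitute this into \eqref{40}: since $P_e(R)=C\,\mu^{-N_sR}(1+o(1))$ for an SNR-independent constant $C>0$, we obtain $G_d=-\lim_{\mu\to\infty}\frac{\log(C\mu^{-N_sR})}{\log\mu}=N_sR$, establishing the claimed diversity order. The main obstacle is the bookkeeping in the second step, namely confirming that $\beta_r$ truly scales as $\mu^{-1}$ rather than at a different rate once the $\mathcal{N}_1$ term is dropped; the positivity argument of the third step, which relies essentially on $N_s<N_d$, is what certifies that the nominal exponent $N_sR$ is actually attained and not annihilated by cancellation.
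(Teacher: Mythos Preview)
Your proposal is correct and follows essentially the same approach as the paper: insert the asymptotic SER \eqref{17ee} into the diversity definition \eqref{40} and read off the exponent of $\mu$ from the SNR scaling of $\prod_{r=1}^R\Delta_{N_s,N_d,r}$. Your argument is in fact more complete than the paper's terse one-line computation $G_d=-\lim_{\mu\to\infty}\frac{\log(\alpha^{N_sR})}{\log\mu}$, since you explicitly verify that $\beta_r\propto\mu^{-1}$ as well and, crucially, use $N_s<N_d$ to ensure the factorials $(N_d-k-1)!$ are finite and positive so that no cancellation destroys the nominal $\mu^{-N_s}$ order of each $\Delta_{N_s,N_d,r}$.
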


\begin{proof}
Using \eqref{40}, \eqref{24pp}, and
\eqref{17ee}, it is easy to show that $G_d=-\displaystyle\lim_{\mu\rightarrow\infty}\frac{\log \left(\alpha^{N_sR}\right)}{\log
    \left(\mu\right)}=N_sR$, and thus, the diversity order $G_d$
becomes $N_sR$.
%
\end{proof}

\begin{figure}[e]
  \centering
  \includegraphics[width=\columnwidth]{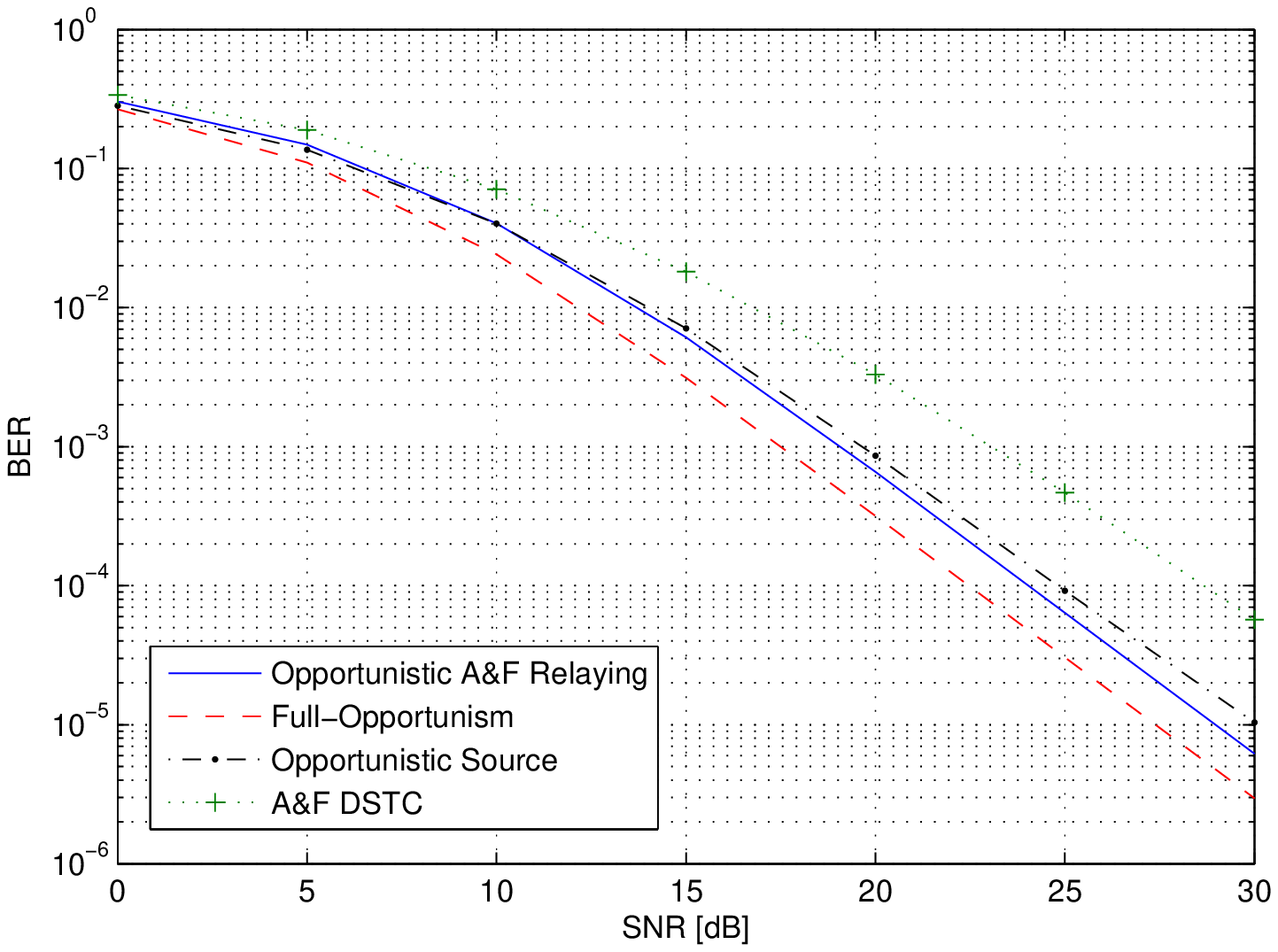}\\
  \caption{Performance comparison of A\&F DSTC with multiple antenna source with the proposed opportunistic relaying schemes for a relay network with BPSK signals, $R=2$, $N_s=2$, $N_d=1$, and $T=4$.
  }
  \label{f2}
\end{figure}
\begin{figure}[e]
  \centering
  \includegraphics[width=\columnwidth]{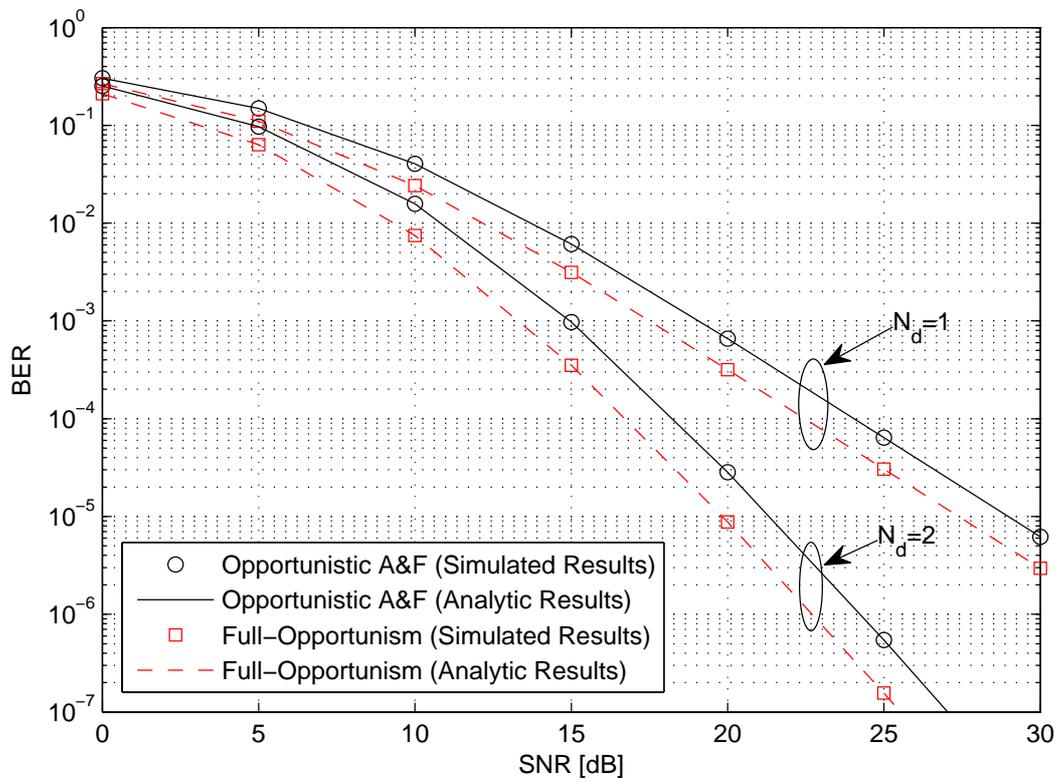}\\
  \caption{Performance comparison of analytical and simulated results
  of a relay network with BPSK signals, $R=2$, $N_s=2$, and $T=4$.
  }
  \label{f3}
\end{figure}
\begin{figure}[e]
  \centering
  \includegraphics[width=\columnwidth]{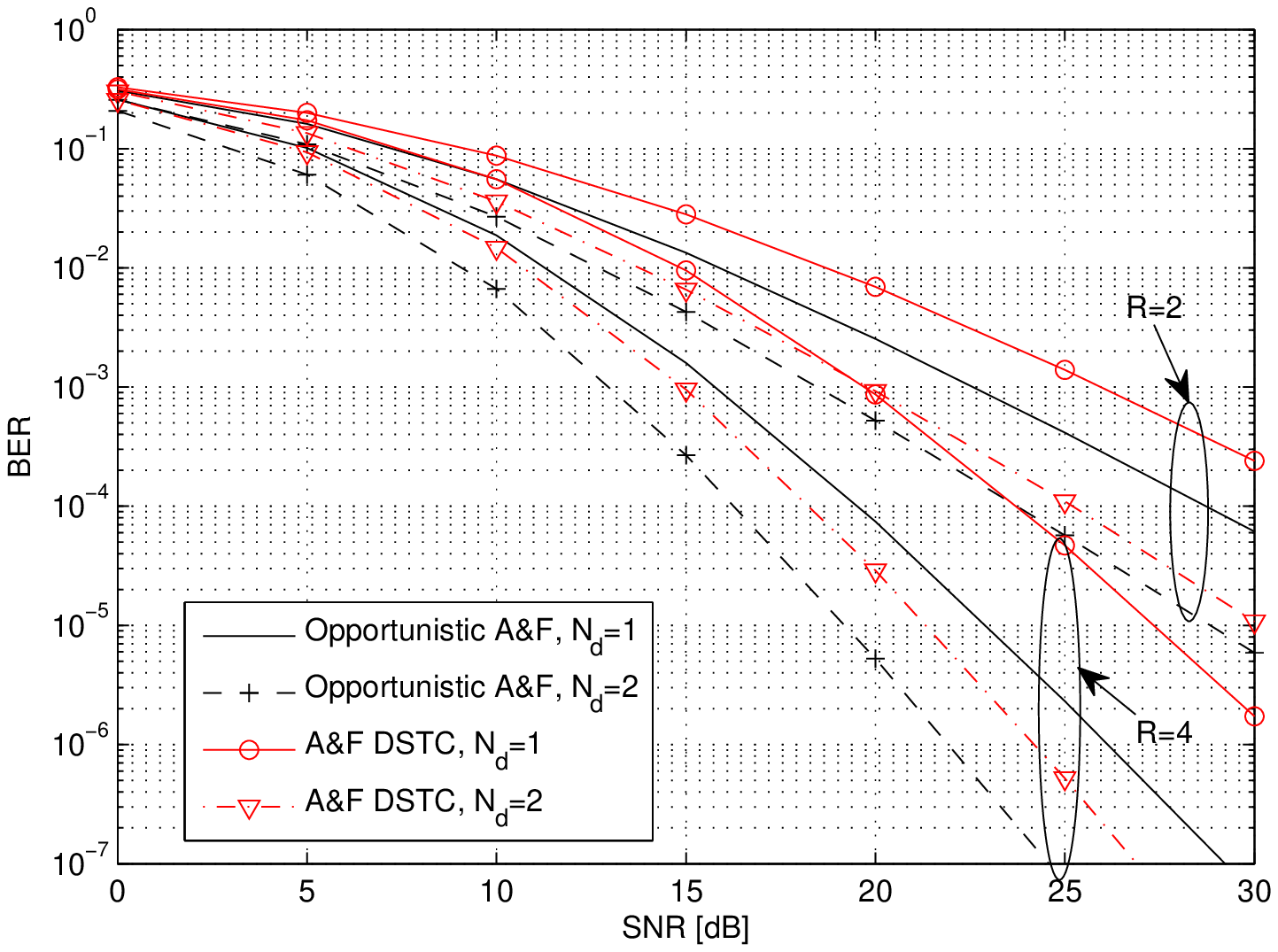}\\
  \caption{Performance comparison of A\&F DSTC and opportunistic A\&F relaying for a relay network for different values of relays and the destination antennas with BPSK signals, $N_s=1$, and $T=4$.
  }
  \label{f4}
\end{figure}

\section{Simulation Results}
In this section, the performance of distributed space-time codes are
compared with opportunistic relaying schemes in A$\&$F mode presented in Section III.
The signal symbols are modulated as BPSK. We fixed the total power
consumed in the whole network as $P$ and use the equal power
allocation, i.e, $P_1=P_2=\frac{P}{2}$. Assume that the relays and the
destination have the same value of noise power, i.e.,
$\mathcal{N}_1=\mathcal{N}_2$, and all the links have unit-variance
Rayleigh flat fading, i.e., $\sigma_{f_r}^2=\sigma_{g_r}^2=1$. Let $T=4$, and we use the orthogonal
space-time code structure in \eqref{1} and \eqref{6a}. For the case
of $N_s=2$, $R=2$, the matrices $\boldsymbol{A}_1$ and $\boldsymbol{A}_2$ used at the source and the matrices $\boldsymbol{C}_1$ and $\boldsymbol{C}_2$ used at the relays are as follows:
$\boldsymbol{A}_1=\boldsymbol{C}_1=\boldsymbol{I}_4$, and
\begin{align}\label{20g}
&\boldsymbol{A}_2=\left[\begin{array}{cccc}
                                     0 & -1 & 0 & 0 \\
                                     1 & 0 & 0 & 0 \\
                                     0 & 0 & 0 & -1 \\
                                     0 & 0 & 1 & 0
                                   \end{array}
                        \right],
     { \, }\boldsymbol{C}_2=\left[\begin{array}{cccc}
                                     0 & 0 & -1 & 0 \\
                                     0 & 0 & 0 & 1 \\
                                     1 & 0 & 0 & 0 \\
                                     0 & -1 & 0 & 0
                                   \end{array}
                        \right]\!\!.
\end{align}

In Fig.~\ref{f2}, the BER performance of the A$\&$F DSTC is compared to
the proposed opportunistic A$\&$F relaying schemes, when the number of available
relays is 2. For A$\&$F DSTC, equal power allocation is used among the
relays \cite{jin05}. The opportunistic A$\&$F scheme is based on the power allocation presented in Subsection III-A, in which the distributed space-time code is applied across the source antennas in the first phase and the best relay is selected in the second phase of transmission. In another scheme, called full-opportunism, we use the power allocation derived in Subsection III-B, in which the CSI is available for the maximum SNR power allocation across the source's antennas and the relays. The other scheme which is called opportunistic source and studied in Subsection III-C, uses the best antenna selection at the source and the distributed space-time code across the relays. One can observe from Fig. \ref{f2} that  full-opportunism outperforms the opportunistic A$\&$F relaying
scheme by more than 1.5 dB SNR at BER $10^{-4}$. Moreover, the opportunistic A$\&$F relaying and opportunistic source schemes achieve around 5 dB and 4 dB gain in SNR over A$\&$F DSTC at BER $10^{-4}$. Observing the curves behavior in high SNR, it can be seen that the diversity order of the system agrees with $R\min\{N_s, N_d\}$.

Fig.~\ref{f3} confirms that the analytical results attained in Section IV for finding the average SER for opportunistic A$\&$F relaying with space-time coded source and also full-opportunism scheme have the same performance as the
simulation results. In Fig.~\ref{f3}, we consider a network with $R=2$ and $N_s=2$ and two values of $N_d\in\{1,2\}$.
The analytical results are based on \eqref{29} and \eqref{29o} for opportunistic A$\&$F relaying and full-opportunism, respectively. It is shown that full-opportunism outperforms opportunistic A$\&$F relaying around 1.5 dB gain in SNR at BER $10^{-5}$ for both cases of $N_d=1$ and $N_d=2$.

In Fig.~\ref{f4}, the performance of A$\&$F DSTC and opportunistic A$\&$F relaying systems are compared for two values of relay numbers $R=2,4$, when a single antenna is used at the source, i.e., $N_s=1$, and $N_d=1,2$. Since it is assumed $N_s=1$, opportunistic A$\&$F relaying and full-opportunism have the same performance. In addition, A$\&$F DSTC and opportunistic source schemes become equivalent. Observing the curves behavior at high SNR, it can be seen that the diversity order of the system becomes $R\min\{N_s, N_d\}$. Furthermore, it can be seen that the performance of A$\&$F DSTC in low SNR conditions degrades as the number of relays increases due to the noise accumulation in the relays. For example, although A$\&$F DSTC system with $R=4$, $N_s=1$, and $N_d=1$ achieve the diversity gain of 4 in comparison to A$\&$F DSTC system with $R=2$, $N_s=1$, and $N_d=2$ with the diversity gain of 2, the former outperforms the latter about 2.5 dB at BER $10^{-2}$.

\section{Conclusion} In this paper, we studied the problem of power allocation and coding for a wireless relay network where both the transmitter and receiver have several antennas, while each relay has one. Due to the high transmission rate, it is not assumed relays are able to decode, and thus, a distributed space-time scheme is used, where relays just do a simple operation on the received signal before forwarding it. The optimal transmit power from the source antennas and the relays in the sense of maximizing the SNR at the destination are derived for a A$\&$F wireless relay network with multiple antenna terminals. Based on the knowledge of CSI at the source and the relay, we have derive three transmission schemes. We analyzed the average SER performance of the A\&F opportunistic relaying and full-opportunism
systems with $M$-PSK and $M$-QAM signals. Simulations are in
accordance with the analytic expressions. We also studied the asymptotic behavior of the proposed schemes and derived the closed-form SER formulas in the high SNR regime.

\appendices

\section{Proof of Proposition 1}
The average SNR at the destination can be obtained by dividing the
average received signal power by the variance of the noise at the
destination (approximation of
$\mathbb{E}\{\text{SNR}_{\text{ins}}\}$ using Jensen's inequality).
Using \eqref{15}, the average SNR can be written as
\begin{equation}\label{6x}
\text{SNR}=\frac{\tau(1-\tau)P^2\sigma_f^2\sigma_g^2}{\tau(\mathcal{N}_2\sigma_f^2-\mathcal{N}_1\sigma_g^2)P
+\mathcal{N}_1\sigma_g^2P+\mathcal{N}_1\mathcal{N}_2},
\end{equation}
where we have assumed $\sigma_{f_r}^2=\sigma_{f}^2$ and
$\sigma_{g_r}^2=\sigma_{g}^2$, for $r=1, \ldots, R$, and thus,
$P_{2,r}=\frac{P_2}{R}$. First, we consider the case in which
$\mathcal{N}_2\sigma_f^2\geq\mathcal{N}_1\sigma_g^2$. In this case, the
optimum value of $\tau$ which maximizes \eqref{6x}, subject to the
constraint $0<\tau<1$, is obtained as
\begin{equation}\label{7x}
\tau=\frac{\sqrt{1+\delta}-1}{\delta},
\end{equation}
where
\begin{equation}\label{8x}
\delta=\frac{(\mathcal{N}_2\sigma_f^2-\mathcal{N}_1\sigma_g^2)\,P}{\mathcal{N}_1\sigma_g^2P+\mathcal{N}_1\mathcal{N}_2}.
\end{equation}

Similarly, when $\mathcal{N}_2\sigma_f^2<\mathcal{N}_1\sigma_g^2$,
the optimum value of $\tau$, which maximizes SNR in \eqref{6x},
subject to constraint $0<\tau<1$, is also \eqref{7x} and
\eqref{8x}. Therefore, observing \eqref{7x} and \eqref{8x}, the
desired result in \eqref{5x} is achieved.

\section{Proof of Proposition 2}
Suppose $X=\|\boldsymbol{f}_{r}\|^2$ and $Y=\|\boldsymbol{g}_{r}\|^2$, where $X$ and
$Y$ have gamma distribution \cite[Eq. (5.14)]{sim00} with mean of
$\overline{X}_{\!r}$ and $\overline{Y}_{\!r}$, respectively.
Therefore, the cumulative density function (CDF) of
$\gamma_r=XY/(aY+b_r)$ can be presented to be
\begin{align}\label{21v}
    \text{Pr}\{\gamma_r<\gamma\}&=\text{Pr}\{XY/(aY+b_r)<\gamma\}
=\int_0^\infty
    \text{Pr}\left\{X<\frac{\gamma(ay+b_r)}{y}\right\}p_Y(y)dy
    \nonumber\\
    &=\int_0^\infty\!\!
    \left(1\!-\!\frac{\Gamma\left(N_s,\frac{\gamma(ay+b_r)}{y\overline{X}_r}\right)}
    {\Gamma(N_s)}\right)\frac{y^{N_d-1}e^{-\frac{y}{\overline{Y}_r}}}{(N_d-1)!\,\overline{Y}_r^{N_d}}dy
    \nonumber\\
    &=1\!-\!\int_0^\infty\!
    \frac{\Gamma\left(N_s,\frac{\gamma(ay+b_r)}{y\overline{X}_r}\right)}
    {\Gamma(N_s)}\frac{y^{N_d-1}e^{-\frac{y}{\overline{Y}_r}}}{(N_d-1)!\,\overline{Y}_r^{N_d}}dy,
\end{align}
where we have used \cite[Eq. (3.324)]{gra96} for the third equality,
$\Gamma(\alpha,x)$ is the incomplete gamma function of order
$\alpha$ \cite[Eq. (8.350)]{abr72}, and
$p_Y(y)=\frac{y^{N_d-1}}{(N_d-1)!\overline{\gamma}_r^{N_d}}e^{-\frac{y}{\overline{Y}_r}}$
\cite[Eq. (5.14)]{sim00}.
Then, using \eqref{21v}, $\Gamma(N_s)=(N_s-1)!$, and
$\frac{-d\,{\Gamma}(\alpha,x)}{dx}=x^{\alpha-1}e^{-x}$ \cite[Eq.
(8.356)]{gra96}, the PDF of $\gamma_r$ can be written as
\begin{align}\label{26}
    p_{r}(\gamma)&=\frac{\overline{X}_{\!r}^{\,-N_s}\gamma^{N_s-1}e^{-\frac{a \gamma}{\overline{X}_{\!r}}}}{\overline{Y}_{\!r}^{N_d}(N_d-1)!\,(N_s-1)!}
    \int_0^\infty y^{N_d-N_s-1}(ay+b_{r})^{N_s}
    e^{-\left(\frac{b_{r}\gamma}{\overline{X}_{\!r}y}+\frac{y}{\overline{Y}_{\!r}}\right)}
    dy
    \nonumber\\
    &=\left(\frac{b_{r}}{\overline{X}_{\!r}}\right)^{\!\!N_s}\!\!
    \frac{\gamma^{N_s-1}e^{-\frac{a \gamma}{\overline{X}_{\!r}}}}{\overline{Y}_{\!r}^{N_d}(N_d-1)!\,(N_s-1)!}
    \sum_{k=0}^{N_s}{N_s \choose k}
    \int_0^\infty y^{N_d-N_s-1}\left(\frac{ay}{b_{r}}\right)^{\!\!k}
    e^{-\left(\frac{b_{r}\gamma}{\overline{X}_{\!r}y}+\frac{y}{\overline{Y}_{\!r}}\right)}
    dy,
\end{align}
where we have used binomial theorem \cite[Eq. (2.36)]{leo94} in the
second equality. Thus, the PDF of $\gamma_r$ can be found by solving
the integral in \eqref{26} using \cite[Eq. (3.471)]{gra96}, yielding
\eqref{24}.

\section{Proof of Proposition 3}
Suppose $X=\displaystyle\max_{n\in\{1,\ldots,N_s\}}|f_{n,r}|^2$ and $Y=\|\boldsymbol{g}_{r}\|^2$ where
$Y$ has gamma distribution \cite[Eq. (5.14)]{sim00} with mean of
$\overline{Y}_{\!r}$. Therefore, the cumulative density function
(CDF) of $\zeta_r=XY/(\alpha Y+\beta_r)$ can be presented to be
\begin{align}\label{21m}
    \text{Pr}\{\zeta_r<\zeta\}&=\text{Pr}\{XY/(\alpha Y+\beta_r)<\zeta\}
    =\int_0^\infty
    \text{Pr}\left\{X<\frac{\zeta(\alpha y+\beta_r)}{y}\right\}p_Y(y)dy
    \nonumber\\
    &=\int_0^\infty
    \text{Pr}\{|f_{1,r}|^2<\frac{\zeta(\alpha y+\beta_r)}{y},\ldots,|f_{N_s,r}|^2<\frac{\zeta(\alpha y+\beta_r)}{y}\}p_Y(y)dy
    \nonumber\\
    &=\int_0^\infty
    \prod_{n=1}^{N_s}\text{Pr}\left\{|f_{n,r}|^2<\frac{\zeta(\alpha y+\beta_r)}{y}\right\}p_Y(y)dy
    \nonumber\\
    &
    =\int_0^\infty
    \left(1-e^{-\frac{\zeta(\alpha y+\beta_r)}{y\sigma_{f_r}^2}}\right)^{N_s}\frac{y^{N_d-1}}{(N_d-1)!\,\overline{Y}_r^{N_d}}
    e^{-\frac{y}{\overline{Y}_r}}dy
    \nonumber\\
    &=\sum_{k=0}^{N_s}{N_s \choose k}\int_0^\infty
    e^{-\frac{\zeta(\alpha y+\beta_r)k}{y\sigma_{f_r}^2}}\frac{(-1)^k y^{N_d-1}}{(N_d-1)!\,\overline{Y}_r^{N_d}}
    e^{-\frac{y}{\overline{Y}_r}}dy.
\end{align}
Then, the PDF of $\zeta_r$ can be written as
\begin{align}\label{21vm}
    p_{\zeta_r}(\zeta)=\frac{d}{d\zeta}\text{Pr}\{\zeta_r<\zeta\}
    =&\sum_{k=1}^{N_s}{N_s \choose k}\int_0^\infty
    e^{-\frac{\zeta(\alpha y+\beta_r)k}{y\sigma_{f_r}^2}}\frac{(-1)^{k+1}k\,\alpha  \, y^{N_d-1}}{(N_d-1)!\,\overline{Y}_r^{N_d}\sigma_{f_r}^2}
    e^{-\frac{y}{\overline{Y}_r}}dy
    \nonumber\\
    &+\sum_{k=1}^{N_s}{N_s \choose k}\int_0^\infty
    e^{-\frac{\zeta(\alpha y+\beta_r)k}{y\sigma_{f_r}^2}}\frac{(-1)^{k+1} k \, \beta_r\,  y^{N_d-2}}{(N_d-1)!\,\overline{Y}_r^{N_d}\sigma_{f_r}^2}
    e^{-\frac{y}{\overline{Y}_r}}dy,
\end{align}
and using \cite[Eq. (3.471)]{gra96}, the PDF of $\zeta_r$ yields to \eqref{24x}.

\section{Proof of Lemma 3}
For finding the value of $p_{\zeta_r}(\zeta)$ and its derivatives around zero, we use the fifth equation of \eqref{21m} to write
\begin{align}\label{21p}
    &p_{\zeta_r}(\zeta)\!=\!\frac{d}{d\zeta}\text{Pr}\{\zeta_r<\zeta\}
    \!=\!\int_0^\infty\!\!\!
    N_s\!\left(1\!-\!e^{-\frac{\zeta(\alpha y+\beta_r)}{y\sigma_{f_r}^2}}\right)^{\!N_s-1}\!e^{-\frac{\zeta(\alpha y+\beta_r)}{y\sigma_{f_r}^2}}
    \frac{(\alpha y+\beta_r) y^{N_d-2}}{\sigma_{f_r}^2(N_d-1)!\,\overline{Y}_r^{N_d}}
    e^{-\frac{y}{\overline{Y}_r}}dy.
\end{align}
Therefore, it follows from \eqref{21m} that $p_{\zeta_r}\!(0)=0$. Moreover, from \eqref{21m}, it can be seen that
$\frac{\partial^{n}
p_{\zeta_r}}{\partial\zeta^{n}}(0)=0$, for $n=1,\ldots, N_s-2$, and $\frac{\partial^{N_s-1}
p_{\zeta_r}}{\partial\zeta^{N_s-1}}(0)$ can be calculated as
\begin{align}\label{21k}
    &\frac{\partial^{N_s-1}
p_{\zeta_r}}{\partial\zeta^{N_s-1}}(0)=
    \lim_{\zeta\rightarrow 0}\int_0^\infty
    N_s!\left(\frac{\alpha y+\beta_r}{y\sigma_{f_r}^2}\right)^{\!\!N_s}e^{-\frac{N_s\zeta(\alpha y+\beta_r)}{y\sigma_{f_r}^2}}
    \frac{y^{N_d-1}}{(N_d-1)!\,\overline{Y}_r^{N_d}}
    e^{-\frac{y}{\overline{Y}_r}}dy
    \nonumber\\
    &=\lim_{\zeta\rightarrow 0}\sum_{k=1}^{N_s}\!\frac{\alpha^{N_s-k}\beta_r^k N_s! {N_s \choose k}e^{-\frac{N_s \alpha
\zeta}{\sigma_{f_r}^2}}}{(N_d\!\!-\!\!1)!\overline{Y}_r^{N_d}}\int_0^\infty
    e^{-\frac{N_s\zeta \beta_r}{y\sigma_{f_r}^2}-\frac{y}{\overline{Y}_r}}
    y^{N_d-k-1}dy
    \nonumber\\
    &=\lim_{\zeta\rightarrow 0}\sum_{k=1}^{N_s}\!\frac{2\alpha^{N_s-k}\beta_r^k N_s! {N_s \choose k}e^{-\frac{N_s \alpha
\zeta}{\sigma_{f_r}^2}}}{(N_d\!\!-\!\!1)!\overline{Y}_r^{N_d}}\left(\!\frac{\beta_{r}k\overline{Y}_{\!r}\zeta
}{\sigma_{f_r}^2}\!\right)^{\!\!\!\frac{N_d-k}{2}}\!\!\!\!\!
    K_{N_d-k}\!\!\left(\!2\sqrt{\!\frac{k \beta_{r}\zeta
}{\sigma_{f_r}^2\!\overline{Y}_{\!r}}}\right).
\end{align}
Using the fact that $K_\nu(x)\approx\frac{1}{2}\Gamma(\nu)\left(\frac{x}{2}\right)^{-\nu}
, \,\nu\neq0$ \cite[Eq. (9.6.9)]{abr72}, as $x\rightarrow 0$, $\frac{\partial^{N_s-1}
p_{\zeta_r}}{\partial\zeta^{N_s-1}}(0)$ in
\eqref{21k} can be approximated as
\begin{align}\label{24p}
    \frac{\partial^{N_s-1}
p_{\zeta_r}}{\partial\zeta^{N_s-1}}(0)\!&=\lim_{\zeta\rightarrow 0}\sum_{k=1}^{N_s}\!\frac{\alpha^{N_s-k}\beta_r^k N_s! {N_s \choose k}(N_d\!-k\!-\!\!1)!\,e^{-\frac{N_s \alpha
\zeta}{\sigma_{f_r}^2}}}{(N_d\!-\!1)! \overline{Y}_r^{\,-k}}
\nonumber\\
&=\sum_{k=1}^{N_s}\!\frac{\alpha^{N_s-k}\beta_r^k N_s! {N_s \choose k}(N_d\!-k\!-\!1)!}{(N_d\!-\!1)! \overline{Y}_r^{\,-k}},
\end{align}
which completes the proof.

\bibliographystyle{ieeetr}
\bibliography{references}
\end{document}